\documentclass[english,12pt]{article}

\usepackage{amsmath}
\usepackage{times}
\usepackage{bm}
\usepackage{natbib}
\usepackage{fullpage}
\usepackage{comment}
\usepackage[plain,noend]{algorithm2e}
\usepackage{algorithmic}
\usepackage{algorithm2e}
\usepackage{amssymb}
\usepackage{mathrsfs}
\usepackage{placeins}
\usepackage{bbm}
\usepackage{float} 
\usepackage[colorlinks=true,linkcolor=black,citecolor=blue]{hyperref}
\usepackage{xcolor}
\usepackage{amsthm}

\usepackage{cleveref}
\crefname{assumption}{Assumption}{Assumptions}
\crefname{remark}{Remark}{Remarks}
\crefname{proposition}{Proposition}{Propositions}
\crefname{theorem}{Theorem}{Theorems}
\crefname{section}{Section}{Section}
\crefname{lemma}{Lemma}{Lemma}
\crefname{algorithm}{Algorithm}{Algorithms}
\crefname{example}{Example}{Examples}
\crefname{figure}{Figure}{Figure}
\crefname{appendix}{Appendix}{Appendix}
\crefname{equation}{equation}{equation}

\newtheorem{theorem}{Theorem}

\newtheorem{proposition}[theorem]{Proposition} 
\newtheorem{assumption}{Assumption} 
\newtheorem{remark}{Remark} 
\newtheorem{lemma}[theorem]{Lemma}
\usepackage{mathtools,xparse}
\usepackage{amsfonts}

\def\trans{{T}}

\DeclarePairedDelimiter{\abs}{\lvert}{\rvert}
\DeclarePairedDelimiter{\norm}{\lVert}{\rVert}

\def\EE{E}
\def\RR{{\mathbb R}}

\def\PP{{\mathbb{P}}}

\def\ndata{{n}}

\def\Up{U^{0}}

\def\strata{\mathcal{S}}

\def\iweight{\omega}

\def\target{{\pi}}

\def\x{{\bm x}}

\def\pos{{\xi}}

\def\vel{{\theta}}

\def\Lc{{\mathcal{L}}}

\def\dd{{\rm d}}

\newcommand{\N}{\mathrm{Normal}}

\newcommand{\U}{\mathrm{Uniform}}

\newcommand{\OO}{\mathcal{O}}
\renewcommand{\RR}{\mathbb{R}}

 \DeclareMathOperator*{\argmin}{\arg\!\min}

 \def\Natt{N_{\mathrm{attempts}}}

\def\llh{\bullet}

\def\effgain{\mathcal{E}}

\usepackage{enumerate}

\allowdisplaybreaks
\begin{document}

\title{Efficient posterior sampling for high-dimensional imbalanced logistic regression}

\author{Deborshee Sen$^{1}$\footnote{The two authors contributed equally to this paper.} $^,$\footnote{Corresponding author.} \\ {ds430@duke.edu} 
\and 
Matthias Sachs$^{2,}$\footnotemark[1]\\ {msachs@math.duke.edu}  
\and 
Jianfeng Lu$^{2}$ \\
{jianfeng@math.duke.edu}
\and
David Dunson$^{1,2}$ \\ {dunson@duke.edu} }

\date{$^1$Department of Statistical Science, Duke University \\
$^2$Department of Mathematics, Duke University}

\maketitle 

\begin{abstract}
\noindent 
Classification with high-dimensional data is of wide-spread interest and often involves dealing with imbalanced data. Bayesian classification approaches are hampered by the fact that current Markov chain Monte Carlo algorithms for posterior computation are inefficient as the number of predictors $p$ or the number of subjects to classify $n$ gets large due to worsening computational time per step and mixing rates.
One strategy is to use a gradient-based sampler to improve mixing while using data sub-samples to reduce per-step computational complexity.  However, usual sub-sampling breaks down when applied to imbalanced data. Instead, we generalize piece-wise deterministic Markov chain Monte Carlo algorithms to include importance-weighted and mini-batch sub-sampling. These maintain the correct stationary distribution with arbitrarily small sub-samples, and substantially outperform current competitors. We provide theoretical support and illustrate gains in simulated and real data applications.   
\end{abstract}

\noindent 
\textbf{Keywords} ~
Imbalanced data; Logistic regression; Piece-wise deterministic Markov processes; Scalable inference;  Sub-sampling.

\section{Introduction}

In developing algorithms for large datasets, much of the focus has been on optimization algorithms that produce a point estimate with no characterization of uncertainty. This motivates scalable Bayesian algorithms.  As variational methods and related analytic approximations lack theoretical support and can be inaccurate, this article focuses on posterior sampling algorithms.  

One such class of methods is divide-and-conquer Markov chain Monte Carlo, which divides data into chunks, runs Markov chain Monte Carlo independently for each chunk, and then combines samples  \citep{li2017simple, srivastava2018scalable}. However, combining samples inevitably leads to some bias, and accuracy theorems require sample sizes to increase within each subset.

An alternative strategy uses sub-samples to approximate transition probabilities and reduce bottlenecks in calculating likelihoods and gradients \citep{welling2011bayesian}. Such approaches typically rely on uniform random sub-samples, which can be highly inefficient, as noted in an increasing frequentist literature on biased sub-sampling \citep{fithian2014local, ting2018optimal}. 
The Bayesian literature has largely overlooked the use of biased sub-sampling in efficient algorithm design, though recent coreset approaches address a related problem \citep{huggins2016coresets}.
A problem with sub-sampling Markov chain Monte Carlo is that it is almost impossible to preserve the correct invariant distribution. While there has been work on quantifying the error \citep{johndrow2015optimal, alquier2016noisy}, it is usually difficult to do so in practice. 
The pseudo-marginal approach of \cite{andrieu2009pseudo} offers a potential solution, but it is generally impossible to obtain the required unbiased estimators of likelihoods using sub-samples \citep{jacob2015nonnegative}. 

A promising recent direction has been using non-reversible samplers with sub-sampling within the framework of piece-wise deterministic Markov processes \citep{bouchard2018bouncy, fearnhead2018piecewise, bierkens2019zig}. These approaches use the gradient of the log-likelihood, which can be replaced by a sub-sample-based unbiased estimator, so that the exactly correct invariant distribution is maintained. This article focuses on improving the efficiency of such approaches by using non-uniform sub-sampling motivated concretely by logistic regression.

\section{Logistic regression with sparse imbalanced data} \label{sec.logistic_PDMP} 

\subsection{Model} \label{sec.model.logistic}

We focus on the logistic regression model
\begin{align} \label{eq.model}
P(y = 1 \mid x, \pos)
& = 
\{ 1+ \exp ( -x^T \pos ) \}^{-1},
\end{align}
where $y \in \{0,1\}$ is the response, $x = (x_1, \ldots, x_p) \in \RR^p$ are predictors, and $\pos = (\pos_1, \ldots, \pos_p) \in \RR^p$ are coefficients for the predictors. Consider data $(y^1, x^1), \ldots, (y^n, x^n)$ from model \eqref{eq.model}, where $x^j = (x_1^j, \ldots, x_p^j) \in \RR^p$.
For a prior distribution $p_0(\pos)$ on $\pos$, the posterior distribution is $\pi(\pos) = p_0(\pos) \prod_{j=1}^n [\exp \{ y^j (x^j)^T \pos \}]/[1 + \exp \{ (x^j)^T \pos \} ] = \exp\{ -U(\pos) \}$,
where $(x^j)^T \pos = \sum_{i=1}^p x_i^j \pos_i \in \RR$ and $U(\pos)$ denotes the potential function.  
A popular algorithm for sampling from $\pi(\xi)$ is P\'olya-Gamma data augmentation \citep{polson2013bayesian}. However, this algorithm performs poorly if there is a large imbalance in class labels $y$ \citep{johndrow2019mcmc}. Similar issues arise when the $x_i^j$s are imbalanced. Logistic regression is routinely used in broad fields and such imbalance issues are extremely common, leading to a clear need for more efficient algorithms.
While standard Metropolis-Hastings algorithms not relying on data augmentation can perform well despite imbalanced data in settings where both $n$ and $p$ are small \citep{johndrow2019mcmc}, issues arise in scaling to large datasets due to increasing computational time per step and slow mixing.

\subsection{The zig-zag process} \label{sec.ZZ}

The zig-zag process \citep{bierkens2019zig} is a type of piece-wise deterministic Markov process which is particularly useful for logistic regression. 
The zig-zag process is a continuous-time stochastic process $\{\pos(t),\vel(t)\}_{t \geq 0}$ on the augmented space $\RR^p \times \{ -1,1 \}^p$, where $\pos(t)$ may be understood as the position and $\vel(t)$ the velocity of the process at time $t$. Under fairly mild conditions, the zig-zag process is ergodic with respect to the product measure $\widetilde{\pi}(\dd \pos, \dd \vel) = \pi(\dd \pos) \otimes \upsilon(\dd \vel)$, where $\upsilon(\dd \vel)$ is the uniform measure on $\{-1, 1\}^p$. 
In other words, $\EE_{(\eta,\beta) \sim \widetilde{\pi}} \{\varphi(\eta,\beta)\} = \lim_{T \rightarrow \infty} T^{-1} \int_{0}^{T} \varphi\{\pos(t),\vel(t)\}\, \dd t$ holds almost surely for any $\widetilde{\pi}$-integrable function $\varphi$.

For a starting point $\pos$ and velocity $\vel$, the zig-zag process evolves deterministically as 
\begin{equation} \label{eq:det:evol}
\pos(t) 
 = 
\pos + \theta t, ~~ \vel(t) = \vel.
\end{equation}
At a random time $\tau$, a bouncing event flips the sign of one component of the velocity $\theta$. The process then evolves as \cref{eq:det:evol} with the new velocity until the next change in velocity. 
The time $\tau$ is the first arrival time of $p$ independent Poisson processes with intensity functions $m_1(t), \dots, m_p(t)$, that is, $\tau = \tau_{i_{0}}$ with $i_0 = \argmin_{i \in \{1, \dots, p\}} \{ \tau_i \}$.  The sign flip applies $F_{i_{0}} : \{ -1,1 \}^p \rightarrow \{ -1,1 \}^p$ to $\vel(t)$, with $\{F_i(\theta)\}_k = \theta_k$ if $k \neq i$ and $-\theta_k$ if $k=i$.
The intensity functions are of the form $m_{i}(t) = \lambda_{i}\{\pos(t),\vel(t)\}$, where $\lambda_{i}$ is a rate function. A sufficient condition for the zig-zag process to preserve $\widetilde{\pi}$ as its invariant distribution is the existence of non-negative functions $\gamma_1, \dots, \gamma_p$, such that $\lambda_{i}(\pos,\vel)  = \gamma_{i}(\pos) + \left \{ \vel_{i}\partial_{i}U(\pos) \right \}^+$ $(i=1,\dots,p)$; here $(x)^+ = \max\{0,x\}$ denotes the positive part of $x \in \RR$. The $\gamma_i$s are known as refreshment rates.

If $\Lambda_{i}(t) = \int_{0}^{t}m_{i}(s) \,\dd s$ has a simple closed form, the arrival times $\tau_{i}$ can be sampled as $\tau_{i} = - \log \Lambda_{i}^{-1}(u)$ for $u\sim \U\{(0,1)\}$. Otherwise, $\tau_{i}$ are obtained via Poisson thinning  \citep{lewis1979simulation}. 
Assume that we have continuous functions $M_i(t) : \RR_+ \rightarrow \RR_+ $ such that $m_i(t) = \lambda_{i}(\pos + t \vel,\vel) \leq M_i(t)$ $(i = 1, \dots, p; ~ t \geq 0)$;
here $M_i(t)$ are upper computational bounds.
Let $\widetilde{\tau}_1, \ldots, \widetilde{\tau}_p$ denote the first arrival times of non-homogeneous Poisson processes with rates $M_1(t), \dots, M_p(t)$, respectively, and let $i_0 = \argmin_{ i \in \{ 1, \ldots, p \} } \{ \widetilde{\tau}_i \}$. A zig-zag process with intensity $m_i(t)=\lambda_i\{\pos(t),\vel(t)\}$ is still obtained if $\pos(t)$ is evolved according to \cref{eq:det:evol} for time $\widetilde{\tau}_{i_{0}}$ instead of $\tau_{i_{0}}$ and the sign of $\vel_{i_{0}}$ is flipped at time $\widetilde{\tau}_{i_{0}}$ with probability $m_{i_0}(\widetilde{\tau}_{i_{0}}) / M_{i_0}(\widetilde{\tau}_{i_{0}})$.

The sub-sampling approach of \cite{bierkens2019zig} uses uniform sub-sampling of a single data point to obtain an unbiased estimate of 
the $i$-th partial derivative of the potential function 
$U(\pos) = U^{0}(\pos) + U^{\llh} (\pos)$, 
where $U^{0} (\pos) = -\log p_{0}(\pos)$ is from the prior and $U^{\llh}(\pos) = \sum_{j=1}^n U^{j}(\pos)$ with $U^{j}(\pos) = - \log  p(y^j \mid x^j, \pos)$ $(j= 1, \dots, n)$ is from the likelihood. Their sub-sampling algorithm preserves the correct stationary distribution. The authors consider estimates $\widehat{\partial}_{i} U (\pos,J)$ such that $\EE_{J \sim \U [\{1,\dots, n\}]} \{ \widehat{\partial}_{i} U (\pos,J)\} = \partial_i U(\pos)$, where $J$ indexes the sampled data point.
This is used to construct a stochastic rate function as $\widehat{m}_{i}^{J}(t) = \{ \vel_{i} \widehat{ \partial}_{i}U (\pos+t\theta,J) \}^{+}$.
By using upper bounds satisfying $\max_{j \in \{ 1, \dots, n\} }\widehat{m}_{i}^{j}(t)  \leq M_{i}(t)$ for all $t \geq 0$,
the rate functions $m_{i}(t)$ can be replaced by stochastic versions $\widehat{m}_{i}^{J}(t)$, with $J$ being resampled at every un-thinned event. 

In addition, control variates can be used to reduce the variance of the estimate $\widehat{\partial}_{i} U (\pos,J)$. 
This can lead to dramatic increases in sampling efficiency when the posterior is concentrated around a reference point \citep{bierkens2019zig}.
In this article, we use isotropic Gaussian priors and focus on situations where either $p$ is large relative to $n$, or the covariates and/or responses are imbalanced. In such situations, the posterior is not sufficiently concentrated around a reference point for control variates to perform efficiently. We demonstrate this numerically in \cref{sec.scaling_CV} for imbalanced responses, and the Appendix contains similar experiments for sparse covariates. For this reason and due to space constraints, we focus our discussion on sub-sampling techniques that do not rely on control variates; the techniques developed can be combined with the use of control variates, as detailed in the Appendix.

\section{Improved sub-sampling} \label{sec.improved_sub-sampling} 

\subsection{General framework} \label{sec:uniform:ss}

We introduce a generalized version of the zig-zag sampler.
Our motivation is to (a) increase the sampling efficiency, and (b) simplify the construction of upper bounds. We achieve (b) by letting the Poisson process that determines  bouncing times in component $i\in \{1,\dots,p\}$ be a super-positioning (that is, the sum) 
 of two independent Poisson processes with state-dependent bouncing rates 
$\lambda_i^0 (\pos,\vel) = \{ \vel_i \partial_i  \Up (\pos) \}^{+}$ and 
$\lambda_i^{\llh}(\pos,\vel) = \{  \vel_i \partial_i  U^{\llh}(\pos) \}^+ + \gamma_{i}(\pos)$, respectively. Such a construction allows Poisson thinning of each process separately, which decouples the problem of constructing suitable upper bounds for the prior and the likelihood, respectively. We achieve (a) through general forms of the estimator $\widehat{\partial}_{i}U$ in the Poisson thinning step obtained through non-uniform sub-sampling. 

The resulting algorithm is presented as \cref{alg.zigzag.generalized}, where $ m_{i}^{0}(t) =  \{\vel_{i} \partial_{i}\Up(\pos+t\vel) \}^{+}$ and
$m^{\llh}_{i}(t, a) = \{ \vel_{i} \widehat{ \partial}_{i} U^{\llh}(\pos+t\vel,a) \}^+$, assuming that $\widehat{ \partial}_{i} U^{\llh}(\pos,a)$, $a \sim \mu_{i}$, is an unbiased estimator of  $\partial_{i}U$, and $M^{\llh}_{i}(t)$ is such that for all $a$ and $t \geq 0$, $m^{\llh}_{i}(t, a)  \leq M^{\llh}_{i}(t)$. To keep the presentation simple, we do not explicitly include the Poisson thinning step for the prior in the algorithm. The state-dependent bouncing rate of the resulting zig-zag process is $\lambda_i (\pos, \theta) = \lambda_i^0 (\pos, \theta) + \lambda_i^{\llh}(\pos, \theta)$,
with  $\lambda^{\llh}_{i}(\pos,\theta)$ having the explicit form $\lambda_i^{\llh}(\pos, \theta)  =  
\EE_{a \sim \mu_{i}} [ \{ \vel_{i} \widehat{ \partial}_{i} U^{\llh}(\pos,a) \}^{+} ]$.
General results on piece-wise deterministic Markov processes imply that such a zig-zag process preserves the target measure $\widetilde{\target}$ \citep{fearnhead2018piecewise}; we nonetheless provide a proof in \cref{sec:lemma:proof} for the sake of a self-contained presentation.
\begin{algorithm}
\caption{Zig-zag algorithm with generalized sub-sampling} 
\label{alg.zigzag.generalized} 

\textbf{Input:} Starting point $\pos^{(0)} \in \RR^p$, initial velocity $\theta^{(0)} \in \{-1,1 \}^p$, maximum number of bouncing attempts $\Natt$.
\begin{algorithmic}[1] 

\STATE 
Set $t^{(0)}=0$.

\FOR{$k = 1, \ldots, \Natt$}

\STATE 
Draw $\widetilde{\tau}_i, \widehat{\tau}_i ~ (i =1,\ldots, p)$ such that 
$ \PP( \widetilde{\tau}_i \geq t) = \exp \{ - \int_0^t M^{\llh}_i(s) \, \dd s  \}$ 
and 
$\PP( \widehat{\tau}_{i} \geq t) = \exp \{ - \int_0^t m^{0}_i(s) \, \dd s \}.$

\STATE 
Set $\tau =  \min \{\widetilde{\tau}_{i_{0}}, \widehat{\tau}_{j_{0}} \}$ where  $i_0 = \argmin_{i} \{ \widetilde{\tau}_i \}, ~j_{0}= \argmin_{i } \{ \widehat{\tau}_i \}$.

\STATE 
Evolve position: $\{t^{(k+1)}, \pos^{(k+1)}\} = \{t^{(k)} + \tau, \pos^{(k)} + \theta^{(k)}  \tau \}$. 

\STATE 
Draw $B \sim \mu_{i_{0}}, u \sim \U\{(0,1)\}$. 

\STATE 
{\bf if} $\widehat{\tau}_{j_{0}}< \widetilde{\tau}_{i_{0}}$ 
{\bf then} $\vel^{(k+1)} = F_{j_{0}}\{ \vel^{(k)}\}$

\STATE
{\bf else if} 
$u < m^{\llh}_{i_0}(\tau,B)/ M^{\llh}_{i_0}(\tau)$
{\bf then} 
$\vel^{(k+1)} = F_{i_{0}}\{ \vel^{(k)}\}$ 

\STATE
{\bf else}  
$\vel^{(k+1)} = \vel^{(k)}$.


%
\ENDFOR
\end{algorithmic}
\textbf{Output:} The path of a zig-zag process specified by skeleton points $\{(\pos^{(k)}, \vel^{(k)})\}_{k=0}^{\Natt}$ and bouncing times $\{t^{(k)}\}_{k=0}^{\Natt}$. 

\end{algorithm}

Although the focus of this article is on sampling from the Bayesian logistic regression problem presented in \cref{sec.model.logistic}, the approach can be readily applied to situations where the following assumption on the terms $U^{j}$ in the log-likelihood applies.
\begin{assumption} \label{as:uniform}
The partial derivatives of $U^{j}$ are bounded, that is, there exist constants $c_{i}^{j}>0 ~ (i=1, \dots, p; j =1,\dots, n)$, such that for all $\pos \in \RR^p$, $|\partial_i U^{j}(\pos)| \leq c_{i}^{j}$.
\end{assumption}

For the logistic regression problem considered, it is shown in \cite{bierkens2019zig} that \cref{as:uniform} is satisfied with 
\begin{align*}
c_i^j = |x_i^j| ~~ (j = 1, \dots, n; ~ i = 1, \dots, p).
\end{align*}
To keep things simple, we consider the prior to be $\N_p(\mathbf{0}, \sigma^2 I_p)$; we discuss other choices of priors in \cref{sec.different_priors}. Then we have that $\{ \vel_i \partial_i  \Up (\pos + \theta \, t) \}^+ \leq (|\pos_i|+t)/\sigma^2$.

In the sequel, we introduce alternative sub-sampling schemes and associated estimators and bounds as variants of the zig-zag sampler. These are designed to improve sampling efficiency by either (a) improving the mixing of the zig-zag process, or (b) reducing the computational cost per simulated unit time interval.
More specifically, we replace uniform sub-sampling with importance sub-sampling (\cref{sec:importance:ss}) to address (b), and allow general mini-batches instead of sub-samples of size one (\cref{sec:mb:ss}) to address (a); \cref{sec:strat:ss} contains a further extension to stratified sub-sampling which allows for further improvements of mixing.

\subsection{Improving bounds via importance sampling} \label{sec:importance:ss}

A generalization of the estimator obtained using uniform sub-sampling $\widehat{\partial}_{i}U^{\llh} (\pos,J) = n \partial_i U^J(\pos)$, $J \sim \U [ \{1, \dots, n\}]$, is to consider the index $J$ to be sampled from a non-uniform probability distribution $\nu_{i}$, defined by $\nu_{i}[\{j\}] = \iweight_{i}^{j} ~ (j=1,\dots,n)$, where $\iweight_i^1, \dots, \iweight_i^n > 0$ are weights satisfying $\sum_{j=1}^n \iweight_i^j=1$ $(i=1, \dots, p)$. It  follows that $\widehat{\partial}_i U^{\llh} (\pos,J) = 
( \iweight_{i}^{J} )^{-1} \partial_i U^{J}(\pos)$, 
$J \sim \nu_{i}$,
defines an unbiased estimator of $\partial_{i}U^{\llh}$. Moreover, $M^{\llh}_{i}(t) = \widetilde{c}_{i}(\iweight)$ with $\widetilde{c}_{i}(\iweight) = \max_{j\in\{1,\dots,n\}} c_{i}^{j} / \iweight_{i}^{j}$ defines an upper bound for rate function $m^{\llh}_{i}(t, J)$ under \cref{as:uniform}. 
The contribution $\lambda^{\llh}_{i}$ to the effective  bouncing rate is $\lambda^{\llh}_{i}(\pos,\theta) = \EE_{J \sim \nu_{i}} [\{ \theta_i \,  \widehat{\partial}_i U^{\llh} (\pos,J) \}^+] = n^{-1} \sum_{j=1}^n \{ \ndata \vel_i \partial_i U^{j}(\pos) \}^{+}$,
which is the same as that for uniform sub-sampling (which corresponds to $w_i^j = n^{-1}$ $(i=1,\dots,p; ~ j = 1, \dots, n)$). 

The magnitudes of upper bounds $M^{\llh}_{i}(t)$ can be minimized by choosing the weight vector $\iweight_{i} = (\iweight_{i}^1, \dots, \iweight_{i}^n)$ such that the constants $\widetilde{c}_{i}$ are minimized. This can be verified to be the case when $\iweight_{i}^{j} = c_{i}^{j}/ \overline{c}_{i}$, $(j=1,\dots, n)$ with $\overline{c}_{i} = \sum_{j=1}^n c_{i}^{j}$ so that $\widetilde{c}_{i}(\iweight) = \overline{c}_{i}$. 
This approach can be trivially generalized to allow for importance weights $\iweight_{i}^{j}=0$ in cases where the respective partial derivative vanishes, that is, $\partial_i U^{j}(\xi) \equiv 0 \Rightarrow c_{i}^{j}=0$, which is, for example, the case when the respective covariate $x_{i}^{j}$ in the considered logistic regression example is zero. For logistic regression, using optimal importance sub-sampling thus reduces the bounds from $n \max_{j \in \{1, \dots, n \} }|x_i^j|$ to $\sum_{j=1}^n |x_i^j|$ for the $i$-th dimension. This reduction is particularly significant when the $x_i^j$s are sparse and/or have outliers (see \cref{sec:scaling_bounds}).

\subsection{Improving mixing via mini-batches}
\label{sec:mb:ss}

In the context of piece-wise deterministic Markov processes, the motivation for using mini-batches of size larger than one is to reduce the effective refreshment rate, which can be expected to improve the mixing of the process if the refreshment rate is high \citep{andrieu2021hypocoercivity}.  
We consider a mini-batch $B = (J_{1},\dots,J_m) \sim \mu_{i}$ of random indices $J_{k} \in \{1, \dots, n\} ~ (k=1,\dots,m$), so that  $\widehat{\partial}_{i}U^{\llh}(\pos,B)$ is an unbiased estimator of  $\partial_{i}U^{\llh}(\pos)$. Entries of the mini-batch are typically sampled uniformly and independently from the data set. This yields unbiased estimators of the form $\widehat{\partial}_{i}U^{\llh}(\pos,B) = m^{-1} \sum_{k=1}^m  
n \partial_i U^{J_{k}}(\pos)$, $J_1, \dots, J_m \sim
\U[\{1, \dots, n\}]$. 

Since for any function $g: \{1, \dots, n\} \rightarrow  \RR$, $\max_{b \in \{1, \dots, n\}^m } m^{-1} \sum_{k=1}^m g(b_{k}) = \max_{b \in \{1, \dots, n\}} g(b)$, it follows that upper bounds for mini-batch size $m=1$ are also upper bounds for mini-batch sizes $m>1$. We can also let $\widehat{\partial}_{i}U^{\llh}(\pos,B) = m^{-1} \sum_{k=1}^m \{ ( \iweight_{i}^{J_k} )^{-1} \partial_i U^{J_{k}}(\pos) \}$, $J_1, \dots, J_m \sim \nu_{i}$, where $B = (J_1, \dots, J_m)$ and $\nu_{i}$ and $\iweight_{i}$ are as defined in \cref{sec:importance:ss}; by the same arguments, we conclude that the value of $\max_{B \in \{1, \dots, n\}^m}\widehat{\partial}_{i}U^{\llh}(\pos,B)$ does not depend on the size of the mini-batch $B$. 

If we consider mini-batches of size $m>1$, the effective bouncing rate of the zig-zag process when used with the estimators described above can be computed as $$\lambda_i^{\llh, \, (m)}(\pos,\vel)
= n^{-m} \sum_{(j_{1}, \dots, j_m)  \in \{1, \dots, n\}^m} \left \{ m^{-1} \sum_{k=1}^{m}\ndata \vel_i \partial_i U^{j_{k}}(\pos) \right \}^{+}.$$
The effective refreshment rate $\gamma_{i}^{\llh,  \, (m)}(\pos) = \lambda_i^{\llh, \, (m)}(\pos,\vel) - \{  \vel_i \partial_i  U^{\llh}(\pos) \}^+$ is reduced with increased mini-batch size, as stated in the following lemma.
\begin{lemma} \label{prop.mini-batch_size} 
For all $\pos\in \RR^p$, $\vel\in \{-1,1\}^p$, $m\geq 1$, we have
$ \gamma_{i}^{\llh, \, (m+1)}(\pos)\leq \gamma_{i}^{\llh, \, (m)}(\pos).$
\end{lemma}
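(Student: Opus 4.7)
The plan is to reduce the inequality to a general monotonicity property of $\mathbb{E}[f(\bar{X}_m)]$ in the sample size $m$, where $f(x) = x^+$ is convex and $\bar{X}_m$ is the empirical mean of $m$ i.i.d.\ copies. Since $\gamma_i^{\llh,(m)}(\pos) = \lambda_i^{\llh,(m)}(\pos,\vel) - \{\vel_i \partial_i U^\llh(\pos)\}^+$ and the second term does not depend on $m$, it suffices to prove $\lambda_i^{\llh,(m+1)}(\pos,\vel) \leq \lambda_i^{\llh,(m)}(\pos,\vel)$. Fix $\pos$ and $\vel$ and introduce i.i.d.\ random variables $X_k = n\vel_i \partial_i U^{J_k}(\pos)$ with $J_k \sim \mathcal{U}[\{1,\dots,n\}]$, so that the explicit sums in the definition of $\lambda_i^{\llh,(m)}$ become $\lambda_i^{\llh,(m)}(\pos,\vel) = \mathbb{E}[(\bar{X}_m)^+]$ with $\bar{X}_m = m^{-1}\sum_{k=1}^m X_k$.

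The central identity I would use is the leave-one-out decomposition
\begin{equation*}
\bar{X}_{m+1} = \frac{1}{m+1}\sum_{k=1}^{m+1} \bar{X}_m^{(k)}, \qquad \bar{X}_m^{(k)} := \frac{1}{m}\sum_{j \neq k} X_j,
\end{equation*}
which follows from a direct counting argument: each $X_j$ appears in exactly $m$ of the $m+1$ leave-one-out averages. Applying the convexity of $x\mapsto x^+$ to this representation yields the pointwise bound $(\bar{X}_{m+1})^+ \leq (m+1)^{-1}\sum_{k=1}^{m+1} (\bar{X}_m^{(k)})^+$. Taking expectation and invoking exchangeability, so that every $\bar{X}_m^{(k)}$ has the same distribution as $\bar{X}_m$, gives $\mathbb{E}[(\bar{X}_{m+1})^+] \leq \mathbb{E}[(\bar{X}_m)^+]$, which is precisely $\lambda_i^{\llh,(m+1)} \leq \lambda_i^{\llh,(m)}$.

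There is no real obstacle here; the only mildly delicate step is recognizing the empirical sums in the definition of $\lambda_i^{\llh,(m)}$ as an expectation under i.i.d.\ uniform sampling, so that the convexity-plus-exchangeability argument applies cleanly. If one instead wanted to handle the importance-weighted estimator from \cref{sec:importance:ss}, exactly the same argument works verbatim with $X_k = (\iweight_i^{J_k})^{-1}\vel_i \partial_i U^{J_k}(\pos)$ and $J_k \sim \nu_i$, since the identity and the convexity of $x\mapsto x^+$ are agnostic to the sampling distribution.
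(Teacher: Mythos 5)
Your proof is correct and is essentially the paper's own argument: the paper likewise writes $\lambda_i^{\llh,(m+1)}$ via the leave-one-out decomposition of the $(m+1)$-sample average and applies subadditivity/convexity of $x\mapsto x^{+}$ together with exchangeability of the uniformly drawn indices, only phrased as explicit sums over index tuples rather than expectations. Your preliminary reduction from $\gamma_i^{\llh,(m)}$ to $\lambda_i^{\llh,(m)}$ and the remark that the argument extends verbatim to the importance-weighted estimator are both sound.
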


\section{Synthetic data examples} \label{sec:synthetic:data}

\subsection{Scaling of computational efficiency} 
\label{sec:scaling_bounds}

We evaluate sampling efficiency using synthetic data generated by sampling the covariates $x_i^j$ from mixture distributions of the form $\nu_{\alpha}(\dd x) = (1- \alpha) \delta_0(\dd x) + \alpha \rho(x)\dd x$, where $\delta_0(\dd x)$ is a point mass at zero, $\rho$ is a smooth density, and $\alpha \in (0,1]$ determines the degree of sparsity. The responses $y_i$ are sampled from \cref{eq.model} with true $\pos = \pos_{\rm true} \in \RR^{p}$; we choose $p=5$ in this section. We further choose a non-informative prior by setting the prior variance to be $\sigma^2 = 10^{10}$. We repeat the data generation and sampling $50$ times. The expected gain in efficiency using importance sub-sampling instead of uniform sub-sampling is estimated as $50^{-1} \sum_{k=1}^{50} (T^{k}_{{\rm unif}}/T^{k}_{{\rm imp}})$, where $T^{k}_{{\rm unif}}$ and $T^{k}_{{\rm imp}}$ denote the total simulation time after $10^5$ attempted bounces of the zig-zag process in the $k$-th run using uniform and importance sub-sampling, respectively. 
\cref{sec:numerical_CV} contains further, similar experiments but using control variates.

We can expect the expected relative gain in efficiency to behave as
\begin{equation*}
\effgain ( \nu_{\alpha} )
:= 
\min_{i \in \{1, \dots, p\}}\EE \left ( \frac{n \max_{j \in \{1,\dots,n\}} \abs{x_{i}^{j}}}{\sum_{j=1}^{n} \abs{x_{i}^j}} \right )
\approx
\frac{ \EE ( \max_{j \in \{1,\dots,n\}} \abs{x_{i}^j} ) }{\EE ( n^{-1} \sum_{j=1}^{n} \abs{x_{i}^j} ) }.
\end{equation*}
We first plot the gain in efficiency for sparse covariates for $n=500$ and decreasing $\alpha$ in the left panel of \cref{fig.scaling_bounds}. 
Indeed, the behavior of the estimated relative gain in efficiency as approximately $\alpha^{-1}$ is as suggested by the first order Taylor expansion of  $\effgain ( \nu_{\alpha} )$ if $n$ is large. Similarly, the form of $\effgain(\nu_{\alpha})$ suggests that the expected relative gain in efficiency is unbounded as the number of observations $n$ increases if  $\rho$ has unbounded support. 
To this end, we choose dense covariates for increasing $n$. As the right panel of \cref{fig.scaling_bounds} shows, the relative gain in efficiency for $\rho = {\rm Laplace}(0,1)$ and $\N(0,1)$ are of order $\log n$ and 
$\log ( n/\log n ) $
as $n\rightarrow \infty$, respectively, which is what the first order Taylor approximation of $\effgain(\nu_{\alpha})$ suggests. 
Additionally, when $\rho = \mathrm{Student}\text{'}\mathrm{s ~ t}(3)$, we observe the efficiency gain to be of order $n^{1/3}$.

\begin{figure}[ht]
\centering 
\includegraphics[width=0.8\textwidth]{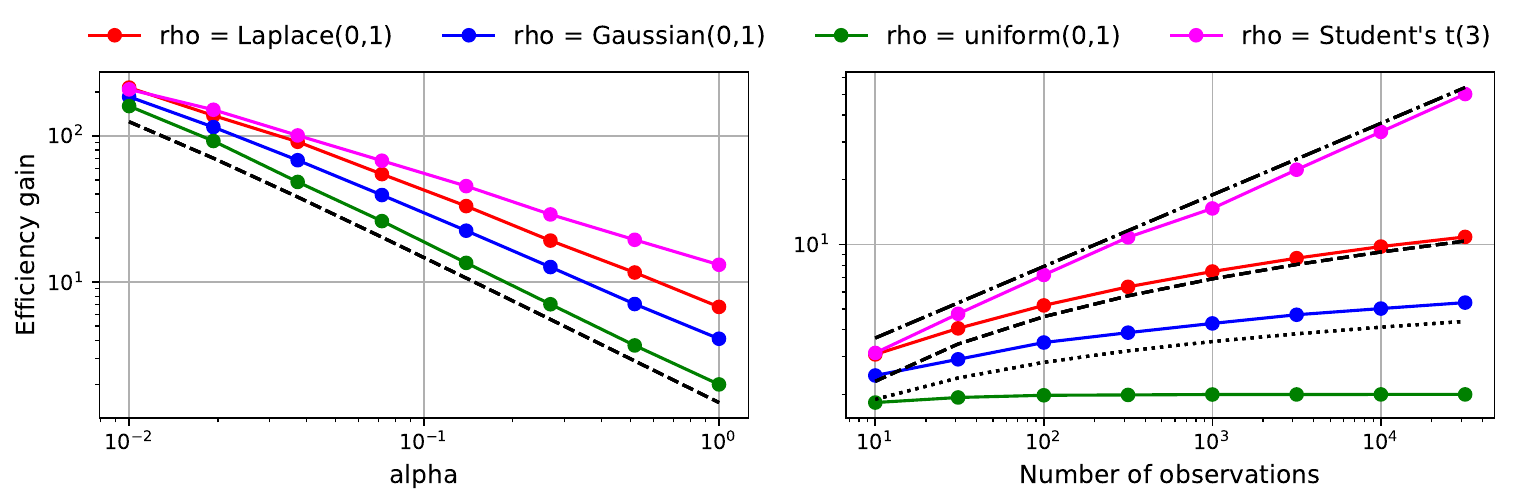}
\caption{Scaling of relative gain in efficiency.  Left panel: $n=500$; dashed black line is proportional to $\alpha^{-1}$.  Right panel: $\alpha=1$; dashed black line is proportional to $\log n$, dotted black line is proportional to $\log ( n/\log n ) (1+1/\log n)$, and dotted-dashed line is proportional to $n^{1/3}$.}
\label{fig.scaling_bounds}
\end{figure}

\subsection{Control variates for sparse data}  
\label{sec.scaling_CV}

As mentioned in \cref{sec.ZZ}, importance sub-sampling can be combined with the use of control variates. However, this fails to be efficient when the data are imbalanced and/or sparse. To demonstrate this, we generate covariates as described in \cref{sec:scaling_bounds} for $\alpha = 10^{-1}$ and $\rho = \mathrm{Laplace}(0,1)$, and generate responses independently of the covariates such that exactly $k$ of them are ones. We choose $p=10$ and $n=5 \times 10^3$, and choose the prior variance to be one. We plot the ratio of the mixing time of the slowest mixing component for importance sub-sampling with control variates divided by the same for importance sub-sampling without control variates as a function of $k$ in the left panel of \cref{fig.highdim_and_scaling}. 
As the responses become more imbalanced ($k$ decreases), the efficiency of using control variates decreases relative to not using them.

\subsection{High-dimensional sparse example} \label{sec.high_dim} 

We consider a challenging setting with number of dimensions $p=10^4$ and number of observations $\ndata=10^6$. The data are generated as in \cref{sec:scaling_bounds} with $\rho = \N(0,1)$ and $\alpha = 10^{-3}$. Traditional data augmentation and/or sub-sampling algorithms are either computationally very expensive or mix slowly in such a scenario. 
We choose the prior variance to be one and show auto-correlation function plots for uniform sub-sampling and importance sub-sampling in the center and right panels of \cref{fig.highdim_and_scaling}, respectively. This shows the necessity of using importance sub-sampling for the zig-zag sampler to be a feasible sampling method. From a practical point of view, adaptive pre-conditioning can further help, and this is described in \cref{sec.precond}.

\begin{figure}[ht]
\centering \includegraphics[width=\textwidth]{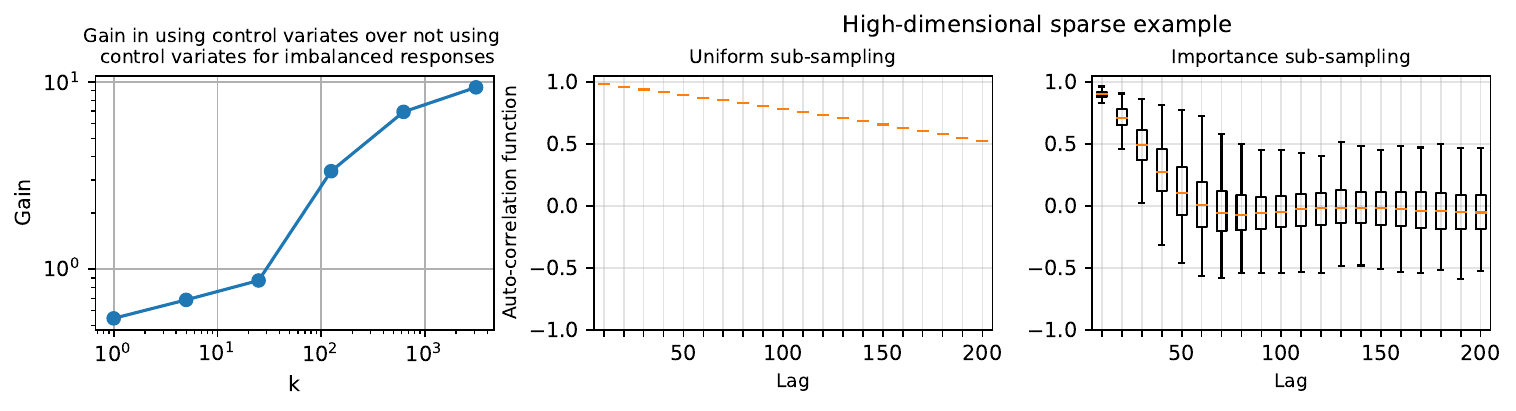}
\caption{Efficiency of using control variates (left panel), and high-dimensional sparse example (center and right panels).}
\label{fig.highdim_and_scaling}
\end{figure}

\section{Real data example} \label{sec.real_examples}

We consider an imbalanced dataset on cervical cancer \citep{fernandes2017transfer}, obtained from the machine learning repository \citep{Dua:2019}.
This has 858 observations with 34 predictors. The responses are whether an individual has cancer or not, with only 18 out of 858 individuals having cancer. The predictors include the number of sexual partners, hormonal contraceptives, et cetera, and more than half the predictors have approximately 80\% zeros. Fixing the number of bouncing attempts, the mixing times of the slowest mixing component for uniform sub-sampling and importance sub-sampling are $975.5$ and $193.2$, respectively. Furthermore, a stratification scheme described in \cref{sec:strat:ss} brings this down to $83.3$.

\section{Discussion} \label{sec.conclusion} 

Sub-sampling data is increasingly important in a variety of scenarios, particularly for big and high-dimensional data. Sub-sampling for traditional Markov chain Monte Carlo schemes can be tricky as the resulting chains induce an error in the invariant distribution that can be difficult to quantify. A promising recent class of algorithms known as piece-wise deterministic Markov processes allow sub-sampling without modifying the invariant measure. Non-uniform sub-sampling strategies can improve such algorithms as compared to uniform sub-sampling, especially for logistic regression with sparse covariate data; however, this can also be the case for other problems in which piece-wise deterministic Markov processes can be applied. 
After completion of this work, we became aware of a 2016 Oxford master's thesis by Nicholas Galbraith, where a method called informed sub-sampling is introduced, which is similar to the importance sub-sampling strategy presented here. While aspects of sparsity in the covariate data are not addressed there, the author makes similar observations regarding the usefulness of the approach in the setup of covariate data with outliers and/or distributed according to heavy-tailed distributions.

\section*{Acknowledgment}

This research was partially supported by grants from the United States National Science Foundation and National Institutes of Health. We are also grateful to the associate editor and two anonymous referees for helping us improve the paper.

\bibliographystyle{apalike}
\bibliography{references}

\begin{thebibliography}{}

\bibitem[Alquier et~al., 2016]{alquier2016noisy}
Alquier, P., Friel, N., Everitt, R., and Boland, A. (2016).
\newblock Noisy {M}onte {C}arlo: convergence of {M}arkov chains with
  approximate transition kernels.
\newblock {\em Statistics and Computing}, 26(1-2):29--47.

\bibitem[Andrieu et~al., 2021]{andrieu2021hypocoercivity}
Andrieu, C., Durmus, A., N{\"u}sken, N., and Roussel, J. (2021).
\newblock Hypocoercivity of piecewise deterministic {M}arkov process-{M}onte
  {C}arlo.
\newblock {\em The Annals of Applied Probability}, 31(5):2478--2517.

\bibitem[Andrieu and Roberts, 2009]{andrieu2009pseudo}
Andrieu, C. and Roberts, G.~O. (2009).
\newblock The pseudo-marginal approach for efficient {M}onte {C}arlo
  computations.
\newblock {\em The Annals of Statistics}, 37(2):697--725.

\bibitem[Armagan et~al., 2013]{armagan2013generalized}
Armagan, A., Dunson, D.~B., and Lee, J. (2013).
\newblock Generalized double {P}areto shrinkage.
\newblock {\em Statistica Sinica}, 23(1):119.

\bibitem[Bezanson et~al., 2017]{bezanson2017julia}
Bezanson, J., Edelman, A., Karpinski, S., and Shah, V.~B. (2017).
\newblock Julia: {A} fresh approach to numerical computing.
\newblock {\em SIAM Review}, 59(1):65--98.

\bibitem[Bhattacharya, 1982]{Bhattacharya1982}
Bhattacharya, R.~N. (1982).
\newblock On the functional central limit theorem and the law of the iterated
  logarithm for {M}arkov processes.
\newblock {\em Zeitschrift f{\"u}r Wahrscheinlichkeitstheorie und verwandte
  Gebiete}, 60(2):185--201.

\bibitem[Bierkens and Duncan, 2017]{bierkens2017limit}
Bierkens, J. and Duncan, A. (2017).
\newblock Limit theorems for the zig-zag process.
\newblock {\em Advances in Applied Probability}, 49(3):791--825.

\bibitem[Bierkens et~al., 2019a]{bierkens2019zig}
Bierkens, J., Fearnhead, P., and Roberts, G. (2019a).
\newblock The zig-zag process and super-efficient sampling for {B}ayesian
  analysis of big data.
\newblock {\em The Annals of Statistics}, 47(3):1288--1320.

\bibitem[Bierkens et~al., 2019b]{bierkens2019ergodicity}
Bierkens, J., Roberts, G.~O., and Zitt, P.-A. (2019b).
\newblock Ergodicity of the zigzag process.
\newblock {\em The Annals of Applied Probability}, 29(4):2266--2301.

\bibitem[Bouchard-C{\^o}t{\'e} et~al., 2018]{bouchard2018bouncy}
Bouchard-C{\^o}t{\'e}, A., Vollmer, S.~J., and Doucet, A. (2018).
\newblock The bouncy particle sampler: A nonreversible rejection-free {M}arkov
  chain {M}onte {C}arlo method.
\newblock {\em Journal of the American Statistical Association}, pages 1--13.

\bibitem[Dua and Graff, 2017]{Dua:2019}
Dua, D. and Graff, C. (2017).
\newblock {UCI} {M}achine {L}earning {R}epository.

\bibitem[Fearnhead et~al., 2018]{fearnhead2018piecewise}
Fearnhead, P., Bierkens, J., Pollock, M., and Roberts, G.~O. (2018).
\newblock Piecewise deterministic {M}arkov processes for continuous-time
  {M}onte {C}arlo.
\newblock {\em Statistical Science}, 33(3):386--412.

\bibitem[Fernandes et~al., 2017]{fernandes2017transfer}
Fernandes, K., Cardoso, J.~S., and Fernandes, J. (2017).
\newblock Transfer learning with partial observability applied to cervical
  cancer screening.
\newblock In {\em Iberian Conference on Pattern Recognition and Image
  Analysis}, pages 243--250. Springer.

\bibitem[Fithian and Hastie, 2014]{fithian2014local}
Fithian, W. and Hastie, T. (2014).
\newblock Local case-control sampling: Efficient subsampling in imbalanced data
  sets.
\newblock {\em The Annals of Statistics}, 42(5):1693.

\bibitem[Fu and Zhang, 2017]{fu2017cpsg}
Fu, T. and Zhang, Z. (2017).
\newblock {CPSG-MCMC}: Clustering-based preprocessing method for stochastic
  gradient {MCMC}.
\newblock In {\em Artificial Intelligence and Statistics}, pages 841--850.

\bibitem[Gelman et~al., 2008]{gelman2008weakly}
Gelman, A., Jakulin, A., Pittau, M.~G., and Su, Y.-S. (2008).
\newblock A weakly informative default prior distribution for logistic and
  other regression models.
\newblock {\em The Annals of Applied Statistics}, 2(4):1360--1383.

\bibitem[Huggins et~al., 2016]{huggins2016coresets}
Huggins, J., Campbell, T., and Broderick, T. (2016).
\newblock Coresets for scalable {B}ayesian logistic regression.
\newblock In {\em Advances in Neural Information Processing Systems}, pages
  4080--4088.

\bibitem[Jacob and Thiery, 2015]{jacob2015nonnegative}
Jacob, P.~E. and Thiery, A.~H. (2015).
\newblock On nonnegative unbiased estimators.
\newblock {\em The Annals of Statistics}, 43(2):769--784.

\bibitem[Johndrow et~al., 2015]{johndrow2015optimal}
Johndrow, J.~E., Mattingly, J.~C., Mukherjee, S., and Dunson, D. (2015).
\newblock Optimal approximating {M}arkov chains for {B}ayesian inference.
\newblock {\em arXiv:1508.03387}.

\bibitem[Johndrow et~al., 2019]{johndrow2019mcmc}
Johndrow, J.~E., Smith, A., Pillai, N., and Dunson, D.~B. (2019).
\newblock {MCMC} for imbalanced categorical data.
\newblock {\em Journal of the American Statistical Association},
  114(527):1394--1403.

\bibitem[Leli{\`e}vre and Stoltz, 2016]{Lelievre2016a}
Leli{\`e}vre, T. and Stoltz, G. (2016).
\newblock Partial differential equations and stochastic methods in molecular
  dynamics.
\newblock {\em Acta Numerica}, 25:681--880.

\bibitem[Lewis and Shedler, 1979]{lewis1979simulation}
Lewis, P.~W. and Shedler, G.~S. (1979).
\newblock Simulation of nonhomogeneous {P}oisson processes by thinning.
\newblock {\em Naval Research Logistics Quarterly}, 26(3):403--413.

\bibitem[Li et~al., 2017]{li2017simple}
Li, C., Srivastava, S., and Dunson, D.~B. (2017).
\newblock Simple, scalable and accurate posterior interval estimation.
\newblock {\em Biometrika}, 104(3):665--680.

\bibitem[Park and Casella, 2008]{park2008bayesian}
Park, T. and Casella, G. (2008).
\newblock The {B}ayesian lasso.
\newblock {\em Journal of the American Statistical Association},
  103(482):681--686.

\bibitem[Polson et~al., 2013]{polson2013bayesian}
Polson, N.~G., Scott, J.~G., and Windle, J. (2013).
\newblock Bayesian inference for logistic models using {P}{\'o}lya--{G}amma
  latent variables.
\newblock {\em Journal of the American Statistical Association},
  108(504):1339--1349.

\bibitem[Srivastava et~al., 2018]{srivastava2018scalable}
Srivastava, S., Li, C., and Dunson, D.~B. (2018).
\newblock Scalable {B}ayes via barycenter in {W}asserstein space.
\newblock {\em The Journal of Machine Learning Research}, 19(1):312--346.

\bibitem[Ting and Brochu, 2018]{ting2018optimal}
Ting, D. and Brochu, E. (2018).
\newblock Optimal subsampling with influence functions.
\newblock In {\em Advances in Neural Information Processing Systems}, pages
  3650--3659.

\bibitem[Vanetti et~al., 2017]{vanetti2017piecewise}
Vanetti, P., Bouchard-C{\^o}t{\'e}, A., Deligiannidis, G., and Doucet, A.
  (2017).
\newblock Piecewise deterministic {M}arkov chain {M}onte {C}arlo.
\newblock {\em arXiv:1707.05296}.

\bibitem[Welling and Teh, 2011]{welling2011bayesian}
Welling, M. and Teh, Y.~W. (2011).
\newblock Bayesian learning via stochastic gradient {L}angevin dynamics.
\newblock In {\em Proceedings of the 28th International Conference on Machine
  Learning (ICML-11)}, pages 681--688.

\bibitem[Williams, 1995]{williams1995bayesian}
Williams, P.~M. (1995).
\newblock Bayesian regularization and pruning using a {L}aplace prior.
\newblock {\em Neural Computation}, 7(1):117--143.

\bibitem[Wong and Easton, 1980]{wong1980efficient}
Wong, C.-K. and Easton, M.~C. (1980).
\newblock An efficient method for weighted sampling without replacement.
\newblock {\em SIAM Journal on Computing}, 9(1):111--113.

\bibitem[Zhao and Zhang, 2014]{zhao2014accelerating}
Zhao, P. and Zhang, T. (2014).
\newblock Accelerating minibatch stochastic gradient descent using stratified
  sampling.
\newblock {\em arXiv:1405.3080}.

\end{thebibliography}

\newpage 
\appendix

\section{More details on the zig-zag process} 

\subsection{The zig-zag algorithm} \label{sec.zz_alg}

The zig-zag algorithm of \cite{bierkens2019zig} is presented in \cref{alg.zig-zag}.

\begin{algorithm}[ht]
\caption{Zig-zag algorithm with Poisson thinning \citep{bierkens2019zig}.} 
\label{alg.zig-zag} 

\textbf{Input:} Starting point $\pos^{(0)} \in \RR^p$, initial velocity $\theta^{(0)} \in \{-1,1 \}^p$, maximum number of bouncing attempts $\Natt$.

\begin{algorithmic}[1] 

\STATE 
Set $t^{(0)} = 0$.

\FOR{$k = 1, \ldots, \Natt$}

\STATE 
Determine $\{M_i(t)\}_{i=1}^p$ such that $m_i(t) \leq M_i(t)$ $(i = 1, \ldots, p)$. 

\STATE 
Draw $\widetilde{\tau}_1, \ldots, \widetilde{\tau}_d$ such that $\PP( \widetilde{\tau}_i \geq t) = \exp \{ \int_0^t M_i(s) \, \dd s \}$.

\STATE 
Let $i_0 = \argmin_{i \in \{1, \dots, p\}} \{ \widetilde{\tau}_i \}$, and set $\tau =  \widetilde{\tau}_{i_{0}}$.

\STATE 
Evolve position:  $\{t^{(k+1)}, \pos^{(k+1)}\} = \{t^{(k)} + \tau, \pos^{(k)} + \theta^{(k)}  \tau \}$. 

\STATE 
Draw $u \sim \U\{(0,1)\}$. 

\STATE
{\bf if} $u < m_{i_0}(\tau)/M_{i_0}(\tau)$ {\bf then} $\vel^{(k+1)} = F_{i_{0}}\{ \vel^{(k)}\}$ {\bf else} $\vel^{(k+1)} = \vel^{(k)}$.

\ENDFOR
\end{algorithmic}
\textbf{Output:} A path of a zig-zag process specified by skeleton points $\{(\pos^{(k)}, \vel^{(k)})\}_{k=0}^{\Natt}$ and bouncing times $\{t^{(k)}\}_{k=0}^{\Natt}$. 
\end{algorithm}

\subsection{Use as a Monte Carlo method} 
\label{sec:zz_MCMC}

Ergodicity of the zig-zag process in the sense of $\EE_{\eta \sim \pi} \{\varphi(\eta)\} = \lim_{T \rightarrow \infty} T^{-1} \int_{0}^{T} \varphi\{\pos(t)\}\, \dd t$ holding almost surely for any $\pi$-integrable function $\varphi$ does not guarantee by itself reliable sampling due to a missing characterization of the statistical properties of the residual Monte Carlo error. The study of ergodic properties of piece-wise deterministic Markov processes has been a very active field of research in recent years. In particular, under certain technical conditions on the potential function $U$, exponential convergence in the law of the zig-zag process has been shown \citep{bierkens2017limit,andrieu2021hypocoercivity,bierkens2019ergodicity}. By \cite{Bhattacharya1982}, this implies a central limit theorem for the zig-zag process for a wide range of observables $\varphi : \RR^p \rightarrow \RR$, which justifies using it for sampling purposes: there is a constant ${\sigma}^{2}_{\varphi}>0$ such that $T^{1/2} \left [ \overline{\varphi}_{T}  - \EE_{\eta \sim \pi} \left \{ \varphi(\eta) \right \} \right ]$ with $\overline{\varphi}_{T}:=T^{-1} \int_{0}^{T} \varphi\{\pos(t)\}\, \dd t$ 
converges in distribution to $\N(0,{\sigma}_{\varphi}^{2})$ as $T \rightarrow \infty$. The constant ${\sigma}_{\varphi}^{2}>0$ is known as the asymptotic variance of the observable $\varphi$ under the process. In practice, the Monte Carlo estimate $\overline{\varphi}_{T}$ may either be computed by numerically integrating the observable $\varphi$ along the piece-wise linear trajectory of the zig-zag process, or by discretizing the trajectory, for example, by using a fixed step-size $\Delta t = T/N>0$ for some positive integer $N$, which results in a Monte Carlo estimate of the form $\widehat{\varphi}_{N} = (N+1)^{-1} \sum_{k=0}^{N} \varphi\{\xi(k \Delta t)\}$.

\section{Zig-zag sampler with generalized sub-sampling} \label{sec:lemma:proof}

We provide a complete algorithm for the zig-zag sampler with generalized sub-sampling as detailed in the main text and show that the target measure is an invariant measure of this process.

 \begin{proposition} \label{prop.zig-zag.generalized}
Let $\mu_{i}$ $(i = 1, \dots, p)$ be probability measures such that
$\widehat{ \partial}_{i} U^{\llh}(\pos,a)$, $a \sim \mu_{i}$, is an unbiased estimator of the $i$-th partial derivative of negative log-likelihood function $\partial_i U^{\llh}(\pos)$. Let 
$m^{\llh}_{i}(t, a) = \{ \vel_{i} \widehat{ \partial}_{i} U^{\llh}(\pos+t\vel,a) \}^+ $ and $M^{\llh}_{i}(t)$ be such that for all $a$ and $t \geq 0$, 
$m^{\llh}_{i}(t, a)  \leq M^{\llh}_{i}(t)$.  
Similarly, let $ m_{i}^0(t) =\{\vel_{i}\partial_{i}\Up(\pos+t\vel) \}^{+}$  and $M^0_{i}(t)$ be such that for all $t \geq 0$, $m_{i}^0(t) \leq M^0_{i}(t)$. 
Then the zig-zag process generated by \cref{alg.zig-zag.generalized}  preserves the target measure $\widetilde{\pi}$, and the effective bouncing rate of the generated zig-zag process is of the form $\lambda_{i}(\pos,\vel)= \lambda^0_{i}(\pos,\vel) + \lambda^{\llh}_{i}(\pos,\vel)$, with 
\begin{equation} \label{eq:def:rate:l}
\lambda_i^0 (\pos, \theta) 
= 
\left \{ \theta_i \partial_i  \Up(\pos) \right \}^{+}
\quad \text{and} \quad 
\lambda_i^{\llh}(\pos, \theta)  
=  
\EE_{a \sim \mu_{i}} \left [ \left \{ \vel_{i} \widehat{ \partial}_{i} U^{\llh}(\pos,a)  \right \}^{+} \right ].
\end{equation}

\end{proposition}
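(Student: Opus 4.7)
The plan is to establish the proposition in two stages. First, I identify the effective bouncing rate of the process defined by \cref{alg.zigzag.generalized}. Second, I verify that a zig-zag process with this effective rate preserves $\widetilde{\pi}$ using the standard invariance criterion for zig-zag processes, reducing the problem to a simple algebraic identity that exploits the unbiasedness hypothesis on $\widehat{\partial}_i U^{\llh}$.

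For the first stage I would rely on two elementary facts about Poisson processes. First, the superposition of two independent inhomogeneous Poisson processes with rates $\lambda^{0}$ and $\lambda^{\llh}$ is itself a Poisson process with rate $\lambda^{0}+\lambda^{\llh}$; this justifies decomposing the bouncing times for coordinate $i$ into the two independent streams generated by $\widehat{\tau}_i$ and $\widetilde{\tau}_i$ in the algorithm. Second, if a homogeneous (or inhomogeneous) Poisson process of rate $M_i^{\llh}(t)$ is thinned by independently accepting each candidate time with probability $m_i^{\llh}(t,B)/M_i^{\llh}(t)$ where $B\sim\mu_i$ is drawn afresh at each attempt, then the accepted times form a Poisson process with rate $\mathbb{E}_{B\sim\mu_i}[m_i^{\llh}(t,B)]$. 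Specializing to $t=0$ yields the two expressions in \eqref{eq:def:rate:l}; note in particular that the upper bound $M_i^{\llh}$ cancels and does not enter the effective rate.

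For the second stage I would invoke the standard invariance criterion for zig-zag processes (see e.g.\ \cite{fearnhead2018piecewise, bierkens2019zig}): $\widetilde{\pi}$ is preserved provided
\begin{equation*}
\lambda_i(\pos,\theta)-\lambda_i(\pos,F_i\theta)=\theta_i \partial_i U(\pos) \qquad (i=1,\dots,p).
\end{equation*}
Using the identity $\{x\}^+-\{-x\}^+=x$, the prior contribution $\lambda_i^{0}$ satisfies this relation with $U^{0}$ in place of $U$. For the likelihood contribution I would compute
\begin{equation*}
\lambda_i^{\llh}(\pos,\theta)-\lambda_i^{\llh}(\pos,F_i\theta)
=\mathbb{E}_{a\sim\mu_i}\!\left[\{\theta_i\widehat{\partial}_i U^{\llh}(\pos,a)\}^+-\{-\theta_i\widehat{\partial}_i U^{\llh}(\pos,a)\}^+\right]
=\theta_i\,\mathbb{E}_{a\sim\mu_i}\!\left[\widehat{\partial}_i U^{\llh}(\pos,a)\right]
=\theta_i \partial_i U^{\llh}(\pos),
\end{equation*}
where the last step uses the unbiasedness hypothesis. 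Adding the two contributions and using $U=U^{0}+U^{\llh}$ gives the required criterion.

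The main obstacle I anticipate is not any of the algebraic steps, but rather carefully justifying that the auxiliary randomness $B\sim\mu_i$ drawn at each attempt does not spoil the Markov property of $\{\pos(t),\vel(t)\}$ nor distort the effective rate — in other words, that the observed accepted bouncing times really do form an inhomogeneous Poisson process with the $B$-averaged rate. This is the content of the random thinning lemma cited above, and follows because each attempt uses an independent draw of $B$ so that, conditional on the current state, the acceptance indicator is Bernoulli with success probability $\mathbb{E}_{B}[m_i^{\llh}(t,B)]/M_i^{\llh}(t)$. Once this point is settled, the two algebraic verifications described above close the proof.
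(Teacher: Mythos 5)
Your proposal is correct and follows essentially the same route as the paper's proof: first arguing that thinning with a fresh draw $a\sim\mu_i$ at each attempt yields a Poisson process with the $\mu_i$-averaged rate $\lambda_i^{\llh}$ (the superposition with the prior stream being handled by the algorithm's minimum of the two clocks), and then verifying the switching identity $\lambda_i(\pos,\vel)-\lambda_i\{\pos,F_i(\vel)\}=\vel_i\partial_i U(\pos)$ via $(x)^+-(-x)^+=x$ and unbiasedness, before invoking the standard sufficient condition for invariance of $\widetilde{\pi}$ (Theorem 2.2 of Bierkens et al.). No gaps; the only cosmetic difference is that the paper phrases the thinning step as a computation of the conditional acceptance probability $\EE_{a\sim\mu_{i_0}}\{m^{\llh}_{i_0}(\tau,a)/M^{\llh}_{i_0}(\tau)\}$ rather than quoting a thinning lemma.
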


\begin{algorithm}
\caption{Zig-zag algorithm with generalized sub-sampling.} 
\label{alg.zig-zag.generalized} 

\textbf{Input:} Starting point $\pos^{(0)} \in \RR^p$, initial velocity $\theta^{(0)} \in \{-1,1 \}^p$, maximum number of bouncing attempts $\Natt$.

\begin{algorithmic}[1] 

\STATE 
Set $t^{(0)} = 0$.

\FOR{$k = 1, \ldots, \Natt$}

\STATE 
Draw $\widetilde{\tau}_i, \widehat{\tau}_i ~ (i =1,\ldots, p)$ such that $\PP( \widetilde{\tau}_i \geq t) = \exp \{ - \int_0^t M^{\llh}_i(s) \, \dd s \}$ and
$\PP( \widehat{\tau}_{i} \geq t) = \exp \{ - \int_0^t M^0_i(s) \, \dd s \}$.

\STATE 
Let $i_0 = \argmin_{i} \{ \widetilde{\tau}_i \}, ~j_{0}= \argmin_{i } \{ \widehat{\tau}_i \}$, and set $\tau =  \min \{\widetilde{\tau}_{i_{0}}, \widehat{\tau}_{j_{0}} \}$.

\STATE 
Evolve position: set $\{t^{(k+1)}, \pos^{(k+1)}\} = \{t^{(k)} + \tau, \pos^{(k)} + \theta^{(k)}  \tau \}$. 
\IF { $  \widehat{\tau}_{j_{0}}< \widetilde{\tau}_{i_{0}}$ }

\STATE
$\vel^{(k+1)} = F_{j_{0}}\{ \vel^{(k)}\}$. 

\ELSE 

\STATE 
Draw $B \sim \mu_{i}$ and $u \sim \U\{(0,1)\}$. 

\STATE 
{\bf if} $u < m^{\llh}_{i_0}(\tau,B)/M^{\llh}_{i_0}(\tau)$ {\bf then} $\vel^{(k+1)} = F_{i_{0}}\{ \vel^{(k)}\}$ {\bf else} $\vel^{(k+1)} = \vel^{(k)}$.

\ENDIF
\ENDFOR
\end{algorithmic}
\textbf{Output:} The path of a zig-zag process specified by skeleton points $\{(\pos^{(k)}, \vel^{(k)})\}_{k=0}^{\Natt}$ and bouncing times $\{t^{(k)}\}_{k=0}^{\Natt}$. 

\end{algorithm}
It follows straightforwardly from \cref{alg.zig-zag.generalized} that the refreshment rate $\gamma_{i}(\pos)$ induced by sub-sampling is 
\begin{equation} \label{eq:refresh}
\gamma_{i}(\pos) 
= 
\frac{ \EE_{a \sim \mu_{i}} \abs{\widehat{ \partial}_{i} U^{\llh}(\pos,a)} - \abs{\partial_{i} U^{\llh}(\pos)}}{2}.
\end{equation}
We have decomposed the stochastic rate function as the sum of a deterministic part $m_i^0(t)$ for the prior and a stochastic part $m_i^{\llh}(t, a)$ for the data, where the stochasticity is with respect to sub-sampling. This construction is related to ideas in \cite{bouchard2018bouncy}. The proof of \cref{prop.zig-zag.generalized}  generalizes that of Theorem 4.1 in \cite{bierkens2019zig}, and results similar to \cref{prop.zig-zag.generalized} can also be found in \cite{vanetti2017piecewise}.  

\begin{proof}[Proof of \cref{prop.zig-zag.generalized}]

The conditional probability of a bounce at time $t+\tau$ in the $i_{0}$-th coordinate given that $\widetilde{\tau}_{i_{0}} < \widehat{\tau}_{j_{0}}$ is 
\begin{equation*}
\EE_{a\sim\mu_{i_{0}}}  \left \{ \frac{m^{\llh}_{i_{0}}(\tau, a)  }{ M^{\llh}_{i_{0}}(\tau)} \right \} 
= 
\frac{\EE_{a\sim\mu_{i_{0}}}  \{ \vel_{i} \widehat{\partial}_{i} U^{\llh}(\pos+t\vel,a) \}^+  }{ M^{\llh}_{i_{0}}(\tau)} 
= 
M^{\llh}_{i_{0}}(\tau)^{-1} \lambda^{\llh}_{i_{0}}(\pos+t \vel,\vel),
\end{equation*}
with $\lambda_{i_{0}}^{\llh}$ as defined in \cref{eq:def:rate:l}.
Thus, since for all $t \geq 0,~\lambda^{\llh}_{i}(\pos+t,\vel) \leq M^{\llh}_{i}(t)$, it follows that the resulting process is indeed a thinned Poisson process with rate function $t\mapsto\lambda^{\llh}_{i}(\pos+t\vel,\vel)$. This rate function $\lambda^{\llh}_{i}$ satisfies 
\begin{align*}
\lambda^{\llh}_{i}(\pos,\vel) - \lambda^{\llh}_{i}\{\pos,F_{i}(\vel)\} 
& =  
\EE_{a\sim\mu_{i}}\left \{ \vel_{i} \widehat{\partial}_{i} U^{\llh}(\pos,a)  \right \}^+ 
- 
\EE_{a\sim\mu_{i}}\left \{- \vel_{i} \widehat{\partial_{i}} U^{\llh}(\pos,a)  \right \}^+
\\
& = 
\EE_{a \sim \mu_{i}}\left \{ \vel_{i} \widehat{\partial}_{i} U^{\llh}(\pos,a)  \right \}^+ -  \EE_{a\sim\mu_{i}}\left \{ \vel_{i} \widehat{\partial}_{i} U^{\llh}(\pos,a)  \right \}^-
\\
& =  
\EE_{a\sim\mu_{i}}  \left \{ \widehat{\partial}_{i} U^{\llh}(\pos,a) \right \} = \partial_{i} U^{\llh}(\pos). 
\end{align*}
Since also $\lambda^0_{i}(\pos,\vel) - \lambda^0_{i}\{\pos,F_{i}(\vel)\} =  \{ \vel_{i}  \partial_{i} \Up(\pos)  \}^+ -  \{ -\vel_{i}  \partial_{i} \Up(\pos)  \}^+ =  \partial_{i} \Up(\pos)$, it follows that the total rate $\lambda_{i}(\pos,\vel) = \lambda_{i}^0(\pos,\vel) + \lambda_{i}^{\llh}(\pos,\vel) $ satisfies $\lambda_i(\pos, \theta) - \lambda_i\{\pos, F_i(\theta)\} = \theta_i \partial_i U(\pos)$, $(i = 1, \dots, p)$. This is a sufficient condition for the zig-zag process to preserve $\widetilde{\pi}$ as its invariant measure \cite[Theorem 2.2]{bierkens2019zig}.

\end{proof}

\section{Importance sub-sampling using control variates} \label{sec:important}

As mentioned in the main text, control variates can be used to reduce the variance of the estimates of partial derivatives in the stochastic rate functions. When data points are sub-sampled uniformly, this is achieved as follows. Let  $\pos^\star$ denote a reference point, usually chosen as a mode of the posterior distribution. Unbiased estimates of the partial derivatives of $\partial_i U^{\llh}(\pos)$ can be obtained as
\begin{equation*} 
\widehat{\partial_{i}} U^{\llh}(\pos,J) 
=
\ndata \partial_i U^{J}(\pos  )   - 
\{  \ndata \partial_i U^{J}(\pos^{\star})   -  \partial_i U^{\llh}(\pos^\star)\},
\end{equation*}
where $J \sim \U [\{1,\dots, n\}]$ indexes the sampled data point, and the second term on the right-hand side is a control variate. If the partial derivatives $\partial_i U^{j}$ $(j=1,\dots,n; ~ i=1,\dots,p)$, are globally Lipschitz as specified in the following  \cref{as:lipschitz}, the corresponding stochastic rate functions $\widehat{m}_{i}^{J}(t) = \{ \vel_{i} \widehat{ \partial}_{i}U (\pos+t\theta,J) \}^{+}$ can be bounded by linear function of $t$.
\begin{assumption}\label{as:lipschitz}
The partial derivatives $\partial_i U^{j}$ are globally Lipschitz, that is, for suitable $r\geq 1$, there exist constants $C_{i}^{j}>0 ~ (i=1, \dots, p; j =1,\dots, n)$, such that for all $\pos_{1}, \pos_{2}\in \RR^p$, $| \partial_i U^{j}(\pos_{1} ) - \partial_i U^{j}(\pos_{2} ) |\leq C_{i}^{j} \norm{\pos_{1} - \pos_{2}}_{r}.$
\end{assumption}

More precisely, if  \cref{as:lipschitz} holds, realizations of $m_i^{\llh}(t,  J)$ can be bounded by $M^{\llh}_{i}(t) = \{ \vel_i \partial_i U^{\llh}(\pos^{\star})\}^{+} + \ndata C_{i} ( \norm{\pos - \pos^{\star}}_{r} + t p^{1/r} )$ for $C_{i} = n \max_{j \in \{1, \dots, n\}} C_{i}^{j}$.  In the case of the logistic regression problem, it is shown in \cite{bierkens2019zig} that \cref{as:lipschitz} holds for $r=2$ with 
\begin{align*}
C_i^j = \frac14 |x_i^j| \|x^j\|_2, ~~ (j=1,\dots,n; ~ i = 1, \dots, d).
\end{align*}

Analogously to how we generalized uniform sub-sampling to importance sub-sampling when uniform bounds are used (Section 3.2 of the main text), the control variate approach can be generalized to incorporate importance sub-sampling as follows. Consider the index $J$ to be sampled from a non-uniform probability distribution $\nu_{i}$, defined by $\nu_{i}[\{j\}] = \iweight_{i}^{j}$ $(j=1,\dots,n)$, where $\iweight_i^1, \dots, \iweight_i^n > 0$ are weights satisfying $\sum_{j=1}^n \iweight_i^j=1$. It  follows that
$\widehat{\partial}_i U^{\llh} (\pos,J) = \partial_i U^{\llh}(\pos^\star)  +  ( \iweight_{i}^{J} )^{-1} \{\partial_i U^{J}(\pos) - \partial_i U^{J}(\pos^\star) \}$, $J \sim \nu_{i}$ defines an unbiased estimator of $\partial_{i}U^{\llh}$, in which case the corresponding upper bounds are 
\begin{align*} 
M^{\llh}_{i}(t) 
= 
\left \{ \vel_i \, \partial_i \, U^{\llh}(\pos^{\star}) \right \}^{+} + \ndata \widetilde{C}_{i}(\iweight)\left ( \norm{\pos - \pos^{\star}}_{r} + t p^{1/r} \right )
~~ \text{with} ~~
\widetilde{C}_{i}(\iweight) 
= 
\max_{j \in \{1, \dots, n\}} \frac{C_{i}^{j}}{\iweight_{i}^{j}}.
\end{align*}
These upper bounds can be minimized by choosing $\iweight_{i}$ that minimizes $\widetilde{C}_{i}$, and this can be verified to be the case when $\iweight_{i}^{j} = C_{i}^{j}/ \overline{C}_{i}$ $(j=1,\dots, n)$ with $\overline{C}_{i} = \sum_{j=1}^n C_{i}^{j}$. 
Thus, optimal importance sub-sampling reduces the upper bounds from $(n/4) \max_{j \in \{1, \dots, n\}} |x_i^j| \| x^j \|_2$ to $(1/4) \sum_{j=1}^n |x_i^j| \| x^j \|_2$. Similar to the case when control variates are not used, if the covariates $x^j$s are sparse or possess outliers, this can result in large computational gains.

We point out that for efficient implementation of importance sub-sampling in combination with control variates, the decomposition of the bouncing rate into a likelihood rate and prior rate as in Algorithm 1 of the main text is critical, since in the presence of a prior, an estimate of the form
\[
\widehat{\partial}_{i} U (\pos,J) = \{ \ndata \partial_i U^{J}(\pos  )  + \partial_i \Up(\pos) \} - 
\{  \ndata \partial_i U^{J}(\pos^{\star}) + \partial_i \Up(\pos^{*})  -  \partial_i U(\pos^\star) \},
\]
with $J \sim \U [\{1,\dots, n\}]$, as for example used in \cite{bierkens2019zig}, results in $\pos$-dependent optimal importance weights. 

\section{Numerical examples for control variates}
\label{sec:numerical_CV}

\subsection{Scaling of computational efficiency}

We provide complementary results for Section 4.1 of the main text in the case when control variates are used. The relative gain in efficiencies is plotted in \cref{fig.scaling_bounds-CV}. Comparing with Figure 1 of the main text, we see that the gains are higher in the case when control variates are used than when they are not used. 
\begin{figure}[ht]
\centering 
\includegraphics[width=0.8\textwidth]{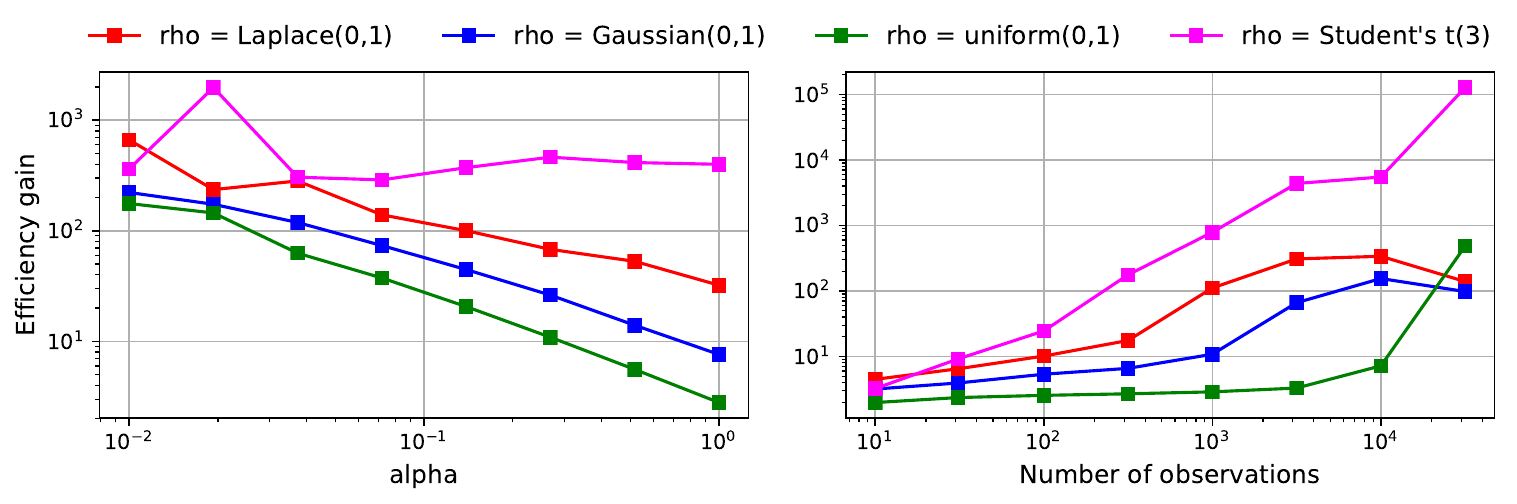}
\caption{Scaling of relative gain in efficiency when control variates are used.}
\label{fig.scaling_bounds-CV}
\end{figure}

\subsection{Dense data} 

We consider an example where the covariates are dense. We choose $n = 500$ and simulate the data as in Section 4.1 of the main text with $\alpha=0$. The prior is chosen as $\N_p(\mathbf{0},I_p)$. We choose $p \in \{10,50\}$ and compare importance sub-sampling with control variates and importance sub-sampling without control variates. Using control variates results in improved mixing of the zig-zag sampler for uniform sub-sampling when $p \ll n$, and the same is observed for importance sub-sampling in the left panel of \cref{fig.acfs_dense}. This is as expected since importance sub-sampling only reduces the upper computational bounds (and thus the total simulated time of the process), and not the diffusive properties of the resulting process. When $p$ is large relative to $n$, the posterior is not concentrated around a reference point, and this causes using control variates to be inefficient as compared to not using control variates, as seen in the right panel of \cref{fig.acfs_dense}. 

\begin{figure}[ht]
\centering 
\includegraphics[width=0.48\textwidth]{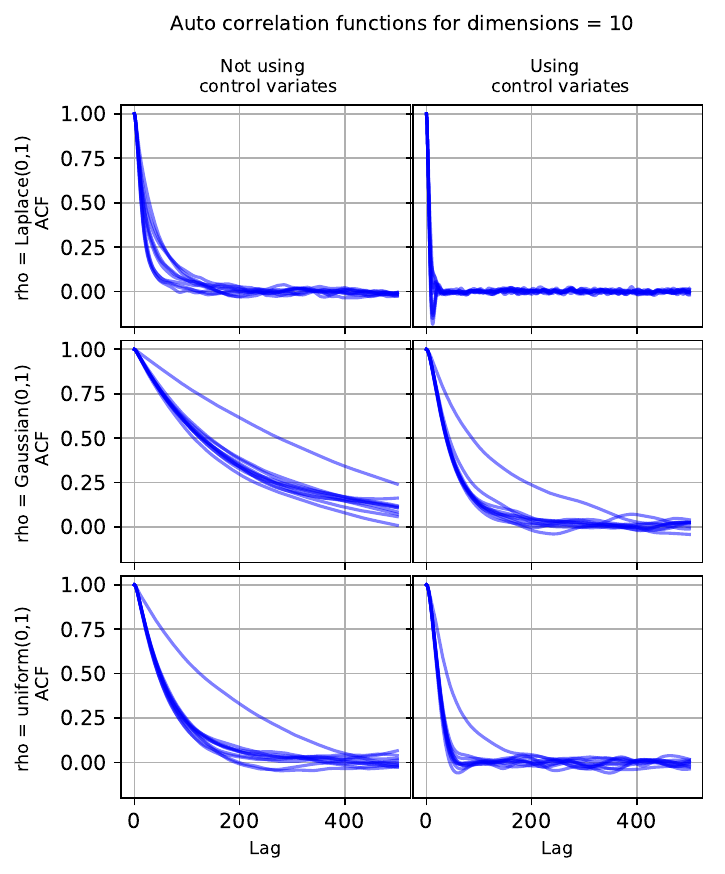}
\includegraphics[width=0.48\textwidth]{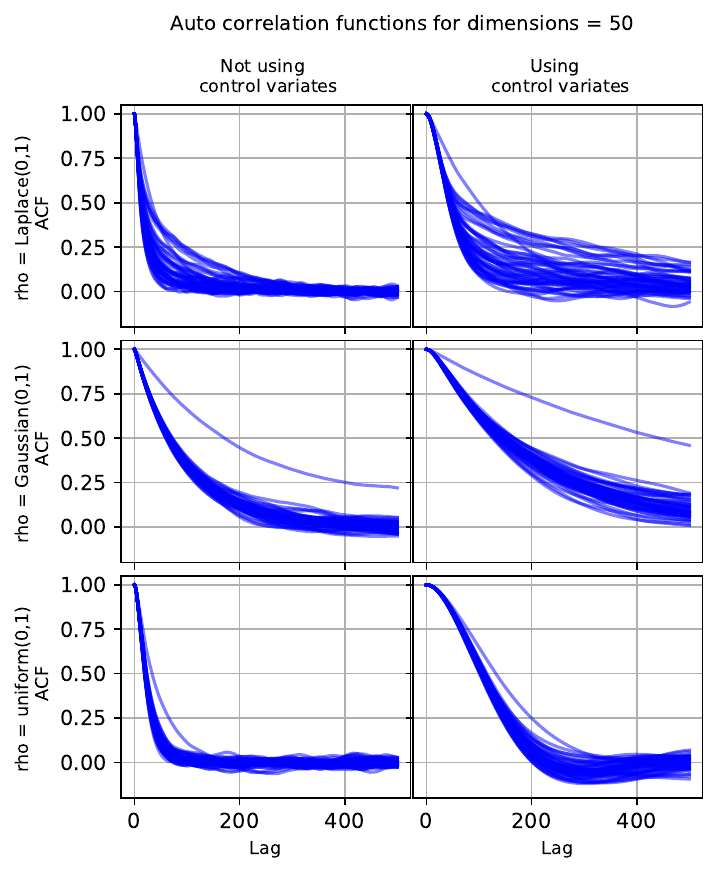}
\caption{Comparison of importance sub-sampling with and without control variates for dense covariates.}
\label{fig.acfs_dense}
\end{figure}

\subsection{Sparse data}

In Section 4.2 of the main text, we observed that control variates fail to perform efficiently as the responses become increasingly imbalanced. We plot the posterior variances of the parameters in the left panel of \cref{fig.imbalanced_responses_and_sparse_covariates} and observe that the posterior variances become increasingly large as the responses become increasingly imbalanced ($k$ decreases). This explains why control variates fail to perform efficiently in such a scenario. 

We conduct a similar experiment for sparse covariates. We use synthetic data and generate covariates as described in Section 4.1 of the main text for $\rho = \mathrm{Laplace}(0,1)$ and increasing level of sparsity among the covariates (that is, decreasing $\alpha$). We choose $p=5$ and $n=10^4$, and generate the responses such that half of them are ones (that is, the responses are perfectly balanced). The prior is chosen to be $\N_p(\mathbf{0}, I_p)$. We plot the ratio of the mixing time of the slowest mixing component for importance sub-sampling with control variates divided by the same for importance sub-sampling without control variates as a function of $\alpha$. As the covariates become sparser ($\alpha$ decreases), the performance of using control variates decreases relative to not using control variates, as seen in the middle panel of \cref{fig.imbalanced_responses_and_sparse_covariates}. We also plot the posterior variances in the right panel of \cref{fig.imbalanced_responses_and_sparse_covariates}, and observe that the posterior variances become increasingly large as the covariates become increasingly sparse.

\begin{figure}[ht]
\centering \includegraphics[width=\textwidth]{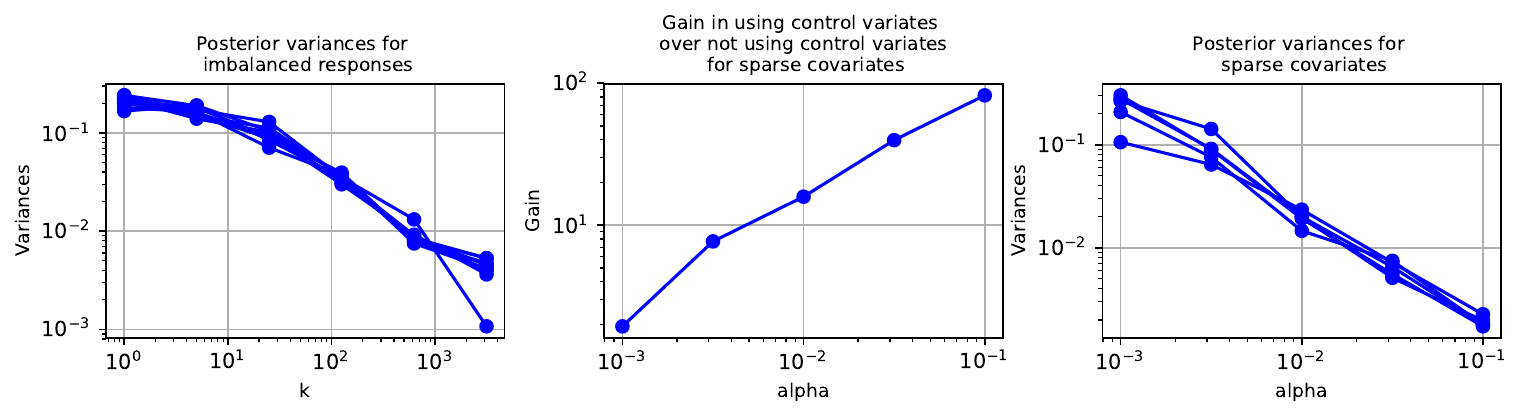}
\caption{Posterior variances for imbalanced responses (left panel), results for sparse covariates (center and right panels).}
\label{fig.imbalanced_responses_and_sparse_covariates}
\end{figure}

Finally, we consider scaling of the sub-sampling schemes with the number of observations $n$ when both responses are increasingly imbalanced and covariates are increasingly sparse. The covariates are again generated as described in Section 4.1 of the main text. We fix the dimension $p=200$ and choose the number of observations $n = 10^3, 5\times10^3, 10^4, 2\times10^4, 5\times10^4$, respectively. The corresponding data sets are simulated such that $\alpha = 10^{-1}, 2 \times 10^{-2}, 10^{-2}, 5\times10^{-3}$ and $2\times10^{-3}$, respectively, and $\rho = \N(0,1)$. 
The responses are simulated such that the proportion of ones are $10^{-2}, 2\times10^{-3}, 10^{-3}, 5\times10^{-4}$ and $2\times10^{-4}$, respectively. In this case, both the covariates become increasingly sparse and the responses become increasingly imbalanced as $n$ increases. We consider the mixing time of the slowest mixing component and plot this in \cref{fig.IACTs_scaling}, and observe that while importance sub-sampling with control variates improves on uniform sub-sampling with control variates, control variates are --as expected-- not efficient. This further supports the results seen in Section 4.2 of the main text. Importance sub-sampling without control variates is the most efficient sub-sampling strategy in this situation.

\begin{figure}[ht]
\centering \includegraphics[width=0.75\textwidth]{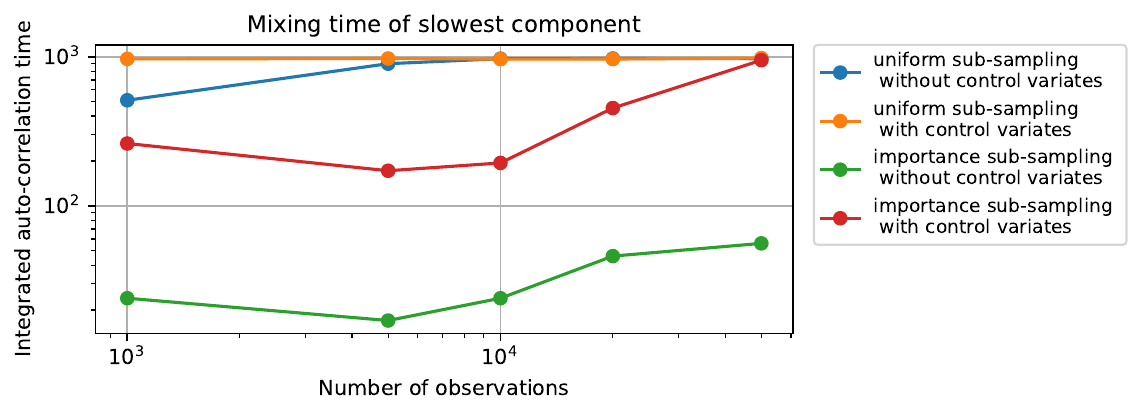}
\caption{Mixing of zig-zag sampler with different sub-sampling schemes.}
\label{fig.IACTs_scaling}
\end{figure}

\section{Stratified sub-sampling} \label{sec:strat:ss}

\subsection{General strategy}

As an alternative to control variate techniques, stratified sub-sampling can be used to reduce the variance of estimates of partial derivatives $\widehat{\partial}_i U$. Such approaches have been explored in the optimization literature \citep{zhao2014accelerating} and in the context of approximate Markov chain Monte Carlo \citep{fu2017cpsg}. The stratification techniques used in these references use strata constructed to minimize upper bounds for the variances of the noise in the estimates of the gradient, and are as such heuristically motivated and lack theoretical guarantees. In what follows, we describe a stratification scheme that is tailored to the particular setup of the zig-zag process and is guaranteed to reduce refreshment rates in the vicinity of a reference point $\pos^{\star}$.

Suppose we divide the data index set $\{1, \dots, n\}$ into $m$ strata $\{\strata_i^{k}\}_{k=1}^m$, which are such that for every component index $i \in \{1, \dots, p\}$, the sets $\{\strata_i^{k}\}_{k=1}^m$ form a partition of $\{1, \dots, n\}$; that is,
$\bigcup_{k=1}^m \strata_i^k = \{1, \dots, n\}$, and $\strata_i^k \cap \strata_i^{l} = \emptyset$ for $k \neq l$.  
If we consider a mini-batch $B$ to be sampled such that the $k$-th entry of $B$ is sampled independently and uniformly from the strata $\strata_{i}^{k}$, that is, $B = (J_{1},\dots,J_m)$, $J_{k} \sim \U (\strata_{i}^{k})$ $(k = 1, \dots, m)$,
then it is easy to verify that
\begin{align} \label{eq:def:U:strata}
\widehat{\partial}_i U^{\llh} (\pos,a) 
& =   
\sum_{k=1}^m \, \abs{\strata_{i}^{k}} \, \partial_i U^{J_k}(\pos), ~~ J_{k} \sim \U (\strata_{i}^{k}) ~~ (k = 1, \dots, m),
\end{align}
is an unbiased estimator for $\partial_{i} U^{\llh}(\pos)$. When constructing upper bounds for the corresponding stochastic rate function, stratified sub-sampling can only improve upon the upper bounds used in uniform sub-sampling, since
\begin{align*} 
m^{\llh}_{i}(t, B) 
& = 
\left \{ \vel_i \sum_{k=1}^m \abs{\strata_{i}^{k}} \, \partial_i U^{J_k}(\pos+t\vel) \right \}^{+}  
\leq 
\sum_{k=1}^m \abs{\strata_{i}^{k}} \left \{ \vel_i \partial_i U^{J_k}(\pos+t\vel) \right \}^{+}  
\\
& \leq 
\sum_{k=1}^m \abs{\strata_{i}^{k}} \frac{M^{\llh}_{i}(t)}{n}
= 
M^{\llh}_{i}(t),
\end{align*} 
where $\{M^{\llh}_i(t)\}_{i=1}^p$ are the upper bounds given in Section 3.1 of the main text. 
This thus leads to a zig-zag process with effective bouncing rates $\lambda^{\llh}_{i}(\pos,\vel) = |\widetilde{\strata}|^{-1} \sum_{(j_{1},\dots,j_m) \in \widetilde{\strata} } \{  \vel_{i}\sum_{k=1}^m \abs{\strata_{i}^{k}}\, \partial_i U^{j_k} (\pos) \}^{+}$,
where $\widetilde{\strata} = \strata_{i}^{1}  \times \dots \times \strata_{i}^m$.

This stratification scheme differs from other related approaches in that we consider separate stratification in each dimension, which allows for a more effective variance reduction as strata on the index set can be tailored to the particular covariance structure of the partial derivatives in each dimension. For iterative methods requiring a synchronous computation of partial derivatives in all dimensions, such a stratification may result in additional computation overhead, whereas in the case of the zig-zag sampler, partial derivatives are computed asynchronously anyway, so we can incorporate such a stratification with no additional computational overhead. 

\subsection{Construction of strata}

In what follows, we describe how gradient information at a reference point $\pos^{\star} \in \RR^p$ can be used to construct strata for the estimator described above so that under the regularity conditions of \cref{as:lipschitz}, the refreshment rate of the associated zig-zag process is provably reduced within some vicinity of $\pos^{\star}$ in comparison to the refreshment rate induced by uniform sub-sampling using the same mini-batch size.

By \cref{eq:refresh} it follows that for given strata $\strata_{i}^{k} ~ (k=1,\dots,m)$, the effective refreshment rate $\gamma_{i}[ \pos, \{\strata_{i}^{k}\}_{k=1}^{m}]$ which is induced by stratified sub-sampling  is of the form
\begin{equation*}
\gamma_{i}[ \pos, \{\strata_{i}^{k}\}_{k=1}^{m} ]
= 
\frac{1}{2}\left \{ \EE_{B \sim \mu_{i}} \abs*{\sum_{k=1}^m \abs{\strata_{i}^{k}}\, \partial_i U^{J_k} (\pos)} - \abs{\partial_{i} U^{\llh}(\pos)} \right \}.
\end{equation*}
If \cref{as:lipschitz} is satisfied, it can be easily derived that
\begin{equation}\label{eq:strat:bound:1}
\gamma_{i}[ \pos, \{\strata_{i}^{k}\}_{k=1}^{m} ] 
\leq
\gamma_{i}[ \pos^\star, \{\strata_{i}^{k}\}_{k=1}^{m} ]  + n C_i \norm{\pos - \pos^{\star}}_{r},
\end{equation}
for all $\pos \in \RR^p$, since 
\begin{align*}
& 
\max_{(j_{1},\dots,j_{k}) \in \widetilde{\strata} } \left | \sum_{k=1}^m \abs{\strata_{i}^{k}}\, \partial_{i}U^{j_{k}}(\pos) - \sum_{k=1}^m \abs{\strata_{i}^{k}}\, \partial_{i}U^{j_{k}}(\pos^{\star}) \right | \\
& \leq 
\ndata \max_{j \in \{1, \dots, \ndata\} } |\partial_{i}U^{j}(\pos) - \partial_{i}U^{j}(\pos^{\star})| \leq n  C_{i}\norm{\pos - \pos^{\star}}_{r}
\end{align*}
for all $\pos \in \RR^p$,
and similarly 
\begin{equation}\label{eq:strat:bound:2}
\gamma_{i}[ \pos, \{\strata_{i}^{k}\}_{k=1}^{m} ] 
\geq
\gamma_{i}[ \pos^\star, \{\strata_{i}^{k}\}_{k=1}^{m} ]   - n C_i \times \norm{\pos - \pos^{\star}}_{r}
\end{equation}
for all $\pos \in \RR^p$.
The inequality \eqref{eq:strat:bound:1} suggests that in order to minimize the refreshment rate $\gamma_{i}$ in the vicinity of the  reference point $\pos^{\star}$, the strata should be constructed in each dimension $i=1,\dots,p$, such that $\gamma_{i}(\pos^{\star})$ is minimized, that is, we construct the strata as the solution to the minimization problem 
\begin{equation}\label{eq:mini}
\{\strata_{i}^{k}\}_{k=1}^m 
= 
\argmin_{\{\widetilde{\strata}_{i}^{k}\}_{k=1}^m} \, \EE_{B \sim \mu_{i}} \abs*{\sum_{k=1}^m \abs{\widetilde{\strata}_{i}^{k}}\, 
\partial_i U^{j_k} (\pos^{\star})}.
\end{equation}
Since solving \cref{eq:mini} may be hard if the number of strata $m$ is large, we consider strata constructed as the solution to the minimization problem
\begin{equation}\label{eq:mini:2}
\{\strata_{i}^{k}\}_{k=1}^m 
= 
\argmin_{\{\widetilde{\strata}_{i}^{k}\}_{k=1}^m}  \sum_{k=1}^m \abs{\widetilde{\strata}_{i}^{k}}\mathrm{diam} \left [ \left \{ \partial_i U^{j} (\pos^{\star}): j\in \widetilde{\strata}_{i}^{k} \right \}\right ],
\end{equation}
where $\pos^{\star}$ denotes the reference point and $\mathrm{diam}(A)= \max (A) - \min(A)$ denotes the diameter of a finite set $A\subset \RR$. This approximation is justified by \cref{lem:inequality}, since the objective function in \eqref{eq:mini:2} is an upper bound of the objective function in \eqref{eq:mini} up to a constant. 

Let $\gamma_{i}^{\llh,  \, (m)}(\pos)$ be the refreshment rate induced by uniform sub-sampling using a mini-batch of size $m$. By applying the inequalities \eqref{eq:strat:bound:1} and \eqref{eq:strat:bound:2} to  $\gamma_{i}[ \pos,\{\strata_{i}^{k}\} ]$ and $\gamma_{i}^{\llh,  \, (m)}(\pos)$, respectively, it follows that the refreshment rate is indeed reduced for all points within the sphere centered at $\pos^{\star}$ and radius $(2nC_{i})^{-1}[ \gamma_{i}(\pos^{\star}) - \gamma_{i}[ \pos^{\star},\{\strata_{i}^{k}\}_{k=1}^{m}] ]$, that is, 
$\gamma_{i}[ \pos,\{\strata_{i}^{k}\}] \leq \gamma_{i}^{\llh,  \, (m)}(\pos)$ for all $\pos \in \RR^{p}$ satisfying
\begin{equation*}
\norm{\pos-\pos^{\star}}_{r} \leq  \frac{1}{2nC_{i}}\left [ \gamma_{i}(\pos^{\star}) - \gamma_{i}[ \pos^{\star},\{\strata_{i}^{k}\}_{k=1}^{m}] \right ].
\end{equation*}

\begin{remark}
The stratification scheme described relies on the same regularity assumptions in terms of $U_{i}^{j}$ $(j=1,\dots,n; ~ i=1,\dots,p)$ as the control variate approaches in order to ensure theoretical guarantees. Moreover, similarly as in the discussed control variate approaches, the method relies on the posterior distribution to be concentrated around a single mode in order for the stratification to be effective. However, unlike the described control variate approaches, stratified sub-sampling is used in combination with uniform bounds, which allows for efficient application of the approach when only overly conservative estimates of Lipschitz constants are available. 
\end{remark}

\subsection{Stratification algorithm} \label{sec:clustering_algorithms}

\cref{alg.clustering} provides an implementation of the clustering algorithm used to obtain strata in the case that gradient information at a reference point $\pos^{\star}$ is available. The algorithm computes 
for input $x_{j}= \partial_i U^{j}(\pos^{\star})$ $(j=1,\dots,\ndata)$, a partition of the set $\{x_{1},\dots,x_{\ndata}\}$. The corresponding partition of the index set $\{1,\dots,\ndata\}$ is an approximate solution of the minimization problem of equation (17) in the main text. Define functions $f_{\textit{score}}$ and $f_{\textit{split-score}}$ as follows for $(r = 1, \dots, n; \, k=1, \dots, r)$:
\begin{align*}
f_{\textit{score}}\{ (x_{1},\dots,x_{r})\} 
& = 
r \left (  \max_{k \in\{1,\dots,r\}} x_{k} - \min_{k \in\{1,\dots,r\}} x_{k}  \right),
\\
f_{\textit{split-score}} \{(x_{1},\dots,x_{r}),k\}
& = 
k f_{\textit{score}}\{(x_{1},\dots,x_{k})\} +  (r-k)  f_{\textit{score}}\{(x_{k+1},\dots,x_{r})\}.
\end{align*}
The algorithm is as follows. 

\begin{algorithm}[ht]
\caption{Greedy clustering } 
\label{alg.clustering} 

\textbf{Input:} Sorted data points  $x_{1} \leq \dots \leq x_{n}$, number of clusters $m$.

\begin{algorithmic}[1] 

\STATE
Set $G^{0} =  \{ ( x_{j} )_{j=1}^n \}$
\FOR{$l = 1, \ldots, m$}
\STATE
Set $(\tilde{x}_{1},\dots,\tilde{x}_{s}) = \argmin_{\x \in G^{l-1}} \min_{k}  \left \{ f_{\textit{split-score}} (\x,k) - f_{\textit{score}}(\x)  \right \}$.
\STATE
Set $k_{0} = \argmin_{k} f_{\textit{split-score}}  \{ (\tilde{x}_{1},\dots,\tilde{x}_{s}),k \}$.
\STATE
Set $G^{l} =  G^{l-1} \setminus \left \{ (\tilde{x}_{1},\dots,\tilde{x}_{s}) \right \}  \cup \left \{ (\tilde{x}_{1},\dots,\tilde{x}_{k_{0}}), \,  (\tilde{x}_{k_{0}+1},\dots,\tilde{x}_{s}) \right \}$.

\ENDFOR

\end{algorithmic}
\textbf{Output:} A set of vectors $\{ \x^{k} \}_{k=1}^m =G^{m}$ defining a partition of the set $\{x_{1}, \dots, x_{n}\}$.
\end{algorithm}

The computational cost of the proposed stratification algorithm is $\OO(n \log n)$ in each dimension, and this can be trivially parallelized in the number of dimensions $p$. 

\begin{remark}[A note on computational aspects]
Computing the strata is a one-off operation that costs $\OO(n \log n)$ in each dimension. The cost for uniform sub-sampling of a single data point is $\OO(1)$. For importance sub-sampling, the cost is $\OO(n)$ for a naive implementation, which reduces down to $\OO(m)$ when the covariates are sparse, where $m$ denotes the number of non-zero weights. These can be brought down
to $\OO(\log n)$ and  $\OO(\log m)$ for dense and sparse covariates, respectively, for example, using the techniques of \cite{wong1980efficient}.
The pre-constant for the order is small compared to the cost of all other operations. 
We wrote code in Julia \citep{bezanson2017julia} and used the \texttt{wsample} function for weighted sub-sampling, which is slower than uniform sub-sampling. 
Using the \texttt{wsample} function leads to the sub-sampling for stratified sub-sampling with importance weights being slightly slower than uniform sub-sampling as well.

\end{remark}

\section{Proofs} \label{sec:proofs}

\subsection{From main text}

\begin{proof}[of Lemma 1 of main text]
Let $B= (j_{1},\dots,j_{m})$ for $B \in \{1, \dots, n\}^{m}$. The result follows as
\begin{align*}
\lambda_{i}^{\llh, \, (m)}(\vel,\pos) 
& = 
\frac{1}{n^{m}}\frac{1}{m} \sum_{B \in \{1, \dots, n\}^{m}}\left \{ \sum_{k=1}^{m} \vel_{i} \partial_i U^{j_{k}} (\pos) \right \}^{+} 
= 
\frac{1}{n^{m+1}}\sum_{B \in \{1, \dots, n\}^{m+1}}\frac{1}{m} \left \{ \sum_{k=1}^{m} \vel_{i} \partial_i U^{j_{k}} (\pos) \right \}^{+} 
\\
& = 
\frac{1}{n^{m+1}}\frac{1}{m+1} \sum_{B \in \{1, \dots, n\}^{m+1}} \sum_{s=1}^{m+1}\frac{1}{m} \left \{ \sum_{k\in\{1, \dots, m+1\}\setminus\{s\}} \vel_{i} \partial_i U^{j_{k}} (\pos) \right \}^{+} 
\\
& \geq 
\frac{1}{n^{m+1}}\frac{1}{m+1} \sum_{B \in \{1, \dots, n\}^{m+1}} \left \{\sum_{s=1}^{m+1}\frac{1}{m}  \sum_{k\in\{1, \dots, m+1\}\setminus\{s\}} \vel_{i} \partial_i U^{j_{k}} (\pos) \right \}^{+} 
\\
& = 
\frac{1}{n^{m+1}}\frac{1}{m+1} \sum_{B \in \{1, \dots, n\}^{m+1}} \left \{\sum_{k\in\{1, \dots, m+1\}} \vel_{i} \partial_i U^{j_{k}} (\pos) \right \}^{+} 
= \lambda_{i}^{\llh, \, (m+1)}(\vel,\pos). \qedhere
\end{align*}
\end{proof}

\subsection{Additional Lemmata}

\begin{lemma}\label{lem:inequality}
The following inequality holds for any component index $i\in\{1,\dots,p\}$ and any partition $\{\strata_{i}^{k}\}_{k=1}^{m}$ of the index set $\{1,\dots,n\}$.
\begin{equation*}
 \abs*{\sum_{k=1}^m \abs{\strata_{i}^{k}}\, \partial_i U^{j_{k}}(\pos)} - \abs{\partial_{i} U^{\llh}(\pos)}  \leq  \sum_{k=1}^n \abs*{\strata_{i}^{k}}\mathrm{diam}\left [\{ \partial_i U^{j}(\pos): j\in \strata_{i}^{k}\}\right ], 
\end{equation*}
\end{lemma}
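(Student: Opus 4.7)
The plan is to reduce the inequality to two triangle-inequality steps around the centered quantity $\sum_{k=1}^{m}|\strata_{i}^{k}|\, \bar{v}_{k}$, where $\bar{v}_{k}$ denotes the average of $\partial_{i} U^{j}(\pos)$ over $j \in \strata_{i}^{k}$. The key observation is that since $\{\strata_{i}^{k}\}_{k=1}^{m}$ is a partition of $\{1,\dots,n\}$ and $U^{\llh} = \sum_{j=1}^{n} U^{j}$, one has the exact identity
\begin{equation*}
\sum_{k=1}^{m} \abs{\strata_{i}^{k}}\, \bar{v}_{k} \;=\; \sum_{k=1}^{m} \sum_{j\in \strata_{i}^{k}} \partial_{i} U^{j}(\pos) \;=\; \partial_{i} U^{\llh}(\pos).
\end{equation*}

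With this identity in hand, I would apply the reverse triangle inequality
\begin{equation*}
\abs*{\sum_{k=1}^{m}\abs{\strata_{i}^{k}}\, \partial_{i}U^{j_{k}}(\pos)} - \abs{\partial_{i}U^{\llh}(\pos)} \;\leq\; \abs*{\sum_{k=1}^{m}\abs{\strata_{i}^{k}}\,\bigl(\partial_{i}U^{j_{k}}(\pos) - \bar{v}_{k}\bigr)},
\end{equation*}
and then the ordinary triangle inequality to pull the absolute value inside the sum, giving the bound $\sum_{k=1}^{m} \abs{\strata_{i}^{k}} \, \abs{\partial_{i}U^{j_{k}}(\pos) - \bar{v}_{k}}$. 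Since both $\partial_{i} U^{j_{k}}(\pos)$ and $\bar{v}_{k}$ lie in the closed interval $[\min_{j \in \strata_{i}^{k}}\partial_{i}U^{j}(\pos),\, \max_{j \in \strata_{i}^{k}}\partial_{i}U^{j}(\pos)]$ (the latter because it is a convex combination of points in the set), each summand is bounded by $\mathrm{diam}[\{\partial_{i} U^{j}(\pos) : j \in \strata_{i}^{k}\}]$, which yields the stated inequality.

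The proof is essentially elementary, and I do not anticipate any real obstacle; the only point that requires a moment of care is the observation that $\bar{v}_{k}$ lies inside the convex hull of $\{\partial_{i}U^{j}(\pos):j\in \strata_{i}^{k}\}$, so that the diameter bound applies to $|\partial_{i}U^{j_{k}}(\pos)-\bar{v}_{k}|$ and not merely to differences of two values already in the set. Note also that the stated inequality holds for any choice of representatives $j_{k}\in\{1,\dots,n\}$ (in particular when $j_{k}\in\strata_{i}^{k}$, which is the case relevant to the expectation in \cref{eq:mini}), so taking expectations under $B\sim \mu_{i}$ recovers precisely the upper bound used to justify the approximate minimization problem \cref{eq:mini:2}.
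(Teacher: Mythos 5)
Your proof is correct and takes essentially the same route as the paper's: both center the sum at the strata averages $\overline{E}_{i}^{k}(\pos)=\abs{\strata_{i}^{k}}^{-1}\sum_{j\in\strata_{i}^{k}}\partial_i U^{j}(\pos)$, use the identity $\sum_{k=1}^{m}\abs{\strata_{i}^{k}}\,\overline{E}_{i}^{k}(\pos)=\partial_i U^{\llh}(\pos)$ together with the (reverse) triangle inequality, and then bound each centered term $\abs{\partial_i U^{j_k}(\pos)-\overline{E}_{i}^{k}(\pos)}$ by the diameter of the stratum's gradient values. One caveat: your closing parenthetical that the inequality holds for arbitrary representatives $j_k\in\{1,\dots,n\}$ is not true --- the diameter step requires $j_k\in\strata_{i}^{k}$ (exactly as your own interval-containment argument uses; singleton strata with a representative drawn from a different stratum give a counterexample) --- but since the case relevant to \cref{eq:mini} has $j_k\in\strata_{i}^{k}$, this side remark does not affect the validity of your proof of the lemma.
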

\begin{proof}
Let $\overline{E}_{i}^{k}(\pos) = \abs{\strata_{i}^{k}}^{-1} \sum_{j \in \strata_{i}^{k}} \partial_i U^{j}(\pos)$.
Then,
\begin{equation*}
\begin{aligned}
 \abs*{\sum_{k=1}^{m} \abs{\strata_{i}^{k}}\, \partial_i U^{j_{k}}(\pos)} - \abs{\partial_{i} U^{\llh}(\pos)} 
 &\leq 
 \abs*{ \sum_{k=1}^{m} \abs{\strata_{i}^{k}} \left \{  \partial_i U^{j_{k}}(\pos) -  \overline{E}_{i}^{k}(\pos)  \right \} }  + \abs*{\sum_{k=1}^{m} \abs{\strata_{i}^{k}}\overline{E}_{i}^{k}(\pos)} - \abs{\partial_{i} U^{\llh}(\pos)}\\
 &\leq 
 \sum_{k=1}^m \abs*{\strata_{i}^{k}}\mathrm{diam}\left [\{ \partial_i U^{j}(\pos): j\in \strata_{i}^{k}\}\right ],
\end{aligned}
\end{equation*}
where the first and second inequality follow as applications of the triangular inequality, and the fact that $\sum_{k=1}^{m} \abs{\strata_{i}^{k}} \, \overline{E}_{i}^{k}(\pos) = \partial_{i} U^{\llh}(\pos)$.
\end{proof}

\section{Pre-conditioning} \label{sec.precond}

\subsection{General recipe for pre-conditioning} 

We propose an adaptive pre-conditioning variant of the zig-zag process,  which can rapidly learn how to pre-condition using initial samples. To achieve this, we modify the velocity by giving it a speed in addition to a direction. The standard zig-zag process has unit speed in all dimensions; indeed, $\vel \in \{-1, 1\}^p$. We still use $\vel$ to denote the direction of the velocity, and in addition, we introduce $\alpha \in \RR_+^p$ to denote the speed. Thus, the overall velocity of the zig-zag process is now $\theta \alpha = (\theta_1 \alpha_1, \dots, \theta_p \alpha_p)$. As mentioned in \cite{bierkens2019zig}, such an extension of the zig-zag process indeed preserves the target measure $\pi$. 
The zig-zag process with adaptive pre-conditioning is constructed so that the velocity vector only changes at bouncing events. At each bouncing event, two things happen. First, the direction of the velocity is modified by flipping the sign of one component of $\theta$ as described in Section 2.2 of the main text. Second, the speed of the process is also modified as described below. 
The deterministic motion (equation (2) of the main text) is modified to $\pos(t) = \pos + \theta \alpha t$.

To construct computational bounds for this modified zig-zag process, observe that the rates from \cref{prop.zig-zag.generalized} are generalized as $m^{\llh}_{i}(t, \alpha, a) = \{ \vel_{i} \alpha_i \widehat{ \partial_{i}} U^{\llh}(\pos+t\vel \alpha, a) \}^+$ and $m_{i}^0(t, \alpha) = \{\vel_{i} \alpha_i \partial_{i}\Up(\pos + t \vel \alpha) \}^{+}$.
Under Assumption 1 of the main text (not using control variates), realizations of the stochastic rate function $m^{\llh}_{i}(t, \alpha, a)$ can be bounded by $\alpha_i \max_{j \in \{1, \dots, n\}} c_i^j$; we thus have a dynamically evolving bound. 
Similarly, under \cref{as:lipschitz} (using control variates), realizations of the stochastic rate function $m^{\llh}_{i}(t, \alpha, a)$ can be bounded by
$\left \{ \vel_i \alpha_i \partial_i U^{\llh}(\pos^{\star}) \right \}^{+} + \ndata \alpha_i C_{i} ( \|\pos - \pos^{\star}\|_{r} + t \, \|\alpha\|_r )$ for $C_{i} = 
\ndata \max_{j \in \{1, \dots, n\}} C_{i}^{j}$.

\subsection{Pre-conditioning scheme} \label{sec.precond_scheme}

We consider an adaptive pre-conditioning scheme using covariance information of the trajectory history. An intuitive reasoning behind this is as follows.
Consider a zig-zag process on $d=2$ dimensions. Suppose the standard deviation in dimension one is twice that in dimension two. 
Intuitively, the process should move twice as fast in the first dimension as in the second dimension for it to mix well. 
Since the standard deviations across dimensions are unknown, we estimate them on the run by storing the first moment and second moment of the process, $\widehat{\mu}^{[1]}(t) = t^{-1}  \int_0^t \pos(s) \, \dd s$ and  $\widehat{\mu}^{[2]}(t) = t^{-1} \int_0^t \pos(s)^2 \, \dd s$, and estimating the standard deviation at time $t$ as $\widehat{\mathrm{sd}}(t) = \{\widehat{\mu}^{[2]}(t) -  \widehat{\mu}^{[1]}(t)^2\}^{1/2}$.
We update the moments $\widehat{\mu}^{[1]}(t)$ and $\widehat{\mu}^{[2]}(t)$ at every bouncing event and choose the speed $\alpha$ as $\alpha_i = d \times \widehat{\mathrm{sd}}(t)_i / \{\sum_{i=1}^p \widehat{\mathrm{sd}}(t)_i\}$ $(i = 1, \dots, p)$, where $\widehat{\mathrm{sd}}(t) = (\widehat{\mathrm{sd}}(t)_1, \dots, \widehat{\mathrm{sd}}(t)_p)$. We have normalized the speed vector to make it sum to $p$ to remain consistent with the usual zig-zag process. 

\subsection{Optimal precondition matrix for Gaussian target with independent components} \label{sec.opt_precond}

Consider a Gaussian target with independent components specified by the potential function $U(\pos) =  \pos^{\trans}\Omega \pos/2$, where $\Omega = {\rm diag}(\omega_{1},\dots, \omega_{p})$ is a diagonal matrix. As shown in \cite{bierkens2019zig}, the generator $\Lc$ of the zig-zag process with speed $\alpha_{i}>0$ in the component $ i=1,\dots,p$ is of the form 
\begin{equation} \label{eq:gen:precond:zz}
\Lc \varphi(\pos,\vel) 
= 
\sum_{i=1}^p
\left [ \alpha_{i}\vel_{i}\partial_{i}\varphi(\pos,\vel) 
+ 
\lambda_{i}(\pos,\vel)  \varphi \{ \pos,F_{i}[\vel] \} \right ] 
-
\varphi(\pos,\vel),
\end{equation}
where $\varphi$ is a suitable observable defined on the domain of the generator $\Lc$. In what follows, we show that the asymptotic variance of the moments of the target is minimized if the speed coefficients are chosen as $\alpha_{i} = \omega_{i}^{-1/2}$.
More precisely, let $\varphi_{i,k}(\pos) = \pos_{i}^{k}$ be the $k$-th power of $\pos_i$ and let $\sigma_{\varphi_{i,k}}^{2}$ be the asymptotic variance of this observable for the zig-zag process. Under the constraint that the speed of all components sum up to $p$, that is, $(\alpha_{1},\dots,\alpha_{p}) \in p \, \Delta^{p-1}$, where $\Delta^{p-1}$ denotes the standard simplex in $\RR^p$, we show that for any moment index $k$, the speed vector which minimizes $\max_{i \in \{1,\dots, p\}} \sigma_{\varphi_{i,k}}^{2}$ is $\alpha_{i} = p \, \omega_{i}^{1/2}/(\sum_{i=1}^p\omega_{i}^{1/2})$ $(i=1, \dots, p)$, that is, 
\begin{equation} \label{eq:sol:opt:omega}
\left ( \frac{p \, \omega_{i}^{1/2}}{\sum_{i=1}^p\omega_{i}^{1/2}} \right )_{1\leq i \leq p}
= 
\argmin_{(\alpha_{1},\dots,\alpha_{p})\in p \, \Delta^{p-1}} \max_{i\in \{1,\dots,p\}}\sigma_{\varphi_{i,k}}^{2}
\end{equation}
To show this, we first note that in the case of a Gaussian target, the rate function $\lambda_{i}(\pos,\vel)$ in \cref{eq:gen:precond:zz} has the explicit form $\lambda_{i}(\pos,\vel) = \left ( \alpha_{i} \vel_{i} \omega_{i}\pos_{i} \right )^{+}$, and thus $\Lc = \sum_{i=1}^p\alpha_{i} \Lc_{i}$
with $\Lc_{i}\varphi(\pos,\vel) = \vel_{i}\partial_{i}\varphi(\pos,\vel) + ( \vel_{i}\omega_{i} \pos_{i} )^{+} [ \varphi\{\pos,F_{i}(\vel)\} - \varphi(\pos,\vel) ]$. 
From \citealp[Example 3.3]{bierkens2017limit}, it follows that the integrated auto-correlation time for each moment $\pos_{i}^{k}, k\in \mathbb{N}$, is proportional to $\omega_{i}^{1/2}$. Moreover, since in the considered setup the zig-zag process converges exponentially fast in law, we can write the asymptotic variance of $\varphi_{i,k}$ as $\sigma^{2}_{\varphi_{i,k}} = -2 \int \{( \alpha_{i}\Lc_{i})^{-1} \varphi_{i,k}(\pos) \} \varphi_{i,k}(\pos) \, \target(\dd \pos)$ \citep[Proposition 9]{Lelievre2016a}. Thus, it follows that $\sigma_{\varphi_{i,k}}^{2}$ is linear in $\alpha_{i}^{-1}$. Therefore, $\sigma_{\varphi_{i,k}}^{2} = c_{k} \alpha_{i}^{-1} \omega_{i}$ with $c_{k}= -2 \int \{\Lc_{i}^{-1} \varphi_{i,k}(\pos) \}\varphi_{i,k}(\pos) \, \target(\dd \pos)$, which implies \cref{eq:sol:opt:omega}.

\subsection{Numerical example}

Recalling the high-dimensional sparse example from Section 4.3 of the main text, we plot the auto-correlation function for the zig-zag process with importance sub-sampling and adaptive-preconditioning in \cref{fig.highdim_precond}. Comparing this to the right panel of Figure 2 of the main text, we observe that adaptive pre-conditioning can improve performance significantly.

\begin{figure}[H]
\centering 
\includegraphics[width=0.48\textwidth]{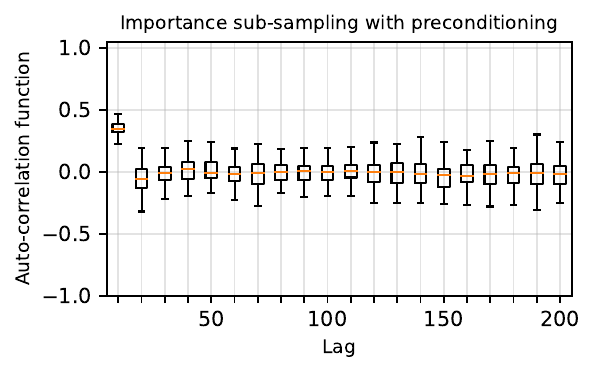}
\caption{Auto-correlation functions for high-dimensional sparse example with adaptive pre-conditioning.}
\label{fig.highdim_precond}
\end{figure}

\section{Other priors} \label{sec.different_priors}

\subsection{A weakly informative prior} 

\cite{gelman2008weakly} propose using independent Cauchy distributions with mean zero and scale 2.5 as the prior for the $\xi_i$'s in logistic regression. Then  
\begin{align*}
p_0(\pos) 
& = 
\prod_{i=1}^p \frac1{2.5 \left\{ 1 + (\pos_i/2.5)^2 \right \} } \\
\Rightarrow 
\Up(\pos) 
& = 
- \log p_0(\pos) 
= 
p \log 2.5 + \sum_{i=1}^p \log \left \{ 1 + \left ( \frac{\xi_i}{2.5} \right )^2 \right \} 
\\
\Rightarrow \partial_i \Up (\pos) 
& = 
\frac{2 \, \pos_i / 2.5 }{1 + \left ( \xi_i/2.5 \right )^2 } 
\Rightarrow 
\left | \partial_i U^0(\pos) \right |
\leq 
\frac{2}{2.5} | \pos_i |,  
\end{align*}
and, therefore, this is a linear bound as in the Gaussian prior case.

\subsection{Generalized double Pareto prior} 

A generalized double Pareto prior for Bayesian shrinkage estimation and inferences was proposed by \cite{armagan2013generalized}. The generalized double Pareto density is given by 
\begin{align*}
p(\pos_i \mid \theta, \alpha) 
& = 
\frac1{2 \theta} \left ( 1 + \frac{|\pos_i|}{\alpha \, \theta} \right )^{-(1+\alpha)},
\end{align*}
where $\theta>0$ is a scale parameter and $\alpha>0$ is a shape parameter. Thus we have 
\begin{align*}
\partial_i U^0(\pos) 
& =
(1+\alpha) \, \partial_i \log \left ( 1 + \frac{|\pos_i|}{\alpha \, \theta} \right ) 
= 
\text{sign}(\pos_i) \frac{1+\alpha}{\alpha \, \theta + |\pos_i|}, \quad \pos_i \neq 0 \\
\Rightarrow \left | \partial_i U^0(\pos) \right | 
& \leq 
\frac{1+\alpha}{\alpha \, \theta},
\end{align*}
which is a constant bound.

\subsection{Laplace prior} 

Another common prior is the double exponential prior, also known as the Laplace prior \citep{williams1995bayesian, park2008bayesian}. The prior is 
\begin{align*}
p(\pos) 
& = 
\prod_{i=1}^p \frac1{2b} \exp \left ( - \frac{|\pos_i|}b \right )
\end{align*}
for some $b \geq 0$. For this prior, $\partial_i U^0(\pos) = \partial_i |\pos_i|/b
\Rightarrow | \partial_i U^0(\pos) | \leq 1/b$, and this is also a constant bound.

\end{document}